\def\icalp{0} 
\newtheorem{theorem}{Theorem}
\newtheorem{lemma}{Lemma}
\newtheorem{claim}[lemma]{Claim}
\newtheorem{definition}[lemma]{Definition}
\newtheorem{corollary}[lemma]{Corollary}
\newtheorem{observation}{Observation}
\newcommand{\pr}{{\rm Pr}}
\newcommand{\dist}{\textrm{dist}}
\def\GI{{\sc GI}}
\def\congest{{\sf CONGEST}}
\def\local{{\sf LOCAL}}
\def\yes{{\sf yes}}
\def\no{{\sf no}}
\def\accept{{\sf accept}}
\def\reject{{\sf reject}}
\def\Ex{{\rm Ex}}
\def\NN{{\mathbb{N}}}
\newcommand{\qed}{\;\;\;\FullBox}
\newenvironment{proof}{\noindent{\bf Proof:~~}}{\(\qed\)}
\newcommand{\eqdef}{~\triangleq~}
\def\eps{\epsilon}
\def\poly{{\rm poly}}
\def\diam#1{D( {#1} )}
\def\FullBox{\hbox{\vrule width 8pt height 8pt depth 0pt}}
\newcommand{\mnote}[1]{}
\definecolor{darkgreen}{rgb}{0.0, 0.5, 0.0}
\newcommand{\Reut}[1]{}
\newcommand{\moti}[1]{}
\newcommand{\Moti}[1]{}
\newcommand{\mchange}[1]{}
\newcommand{\rchange}[1]{}
\title{Distributed Testing of Graph Isomorphism in the \congest\ model}
\author{Reut Levi}{Efi Arazi School of Computer Science,  The Interdisciplinary Center, Israel}{reut.levi1@idc.ac.il}{}{}
\author{Moti Medina}{School of Electrical \& Computer Engineering, Ben-Gurion University of the Negev, Israel}{medinamo@bgu.ac.il}{https://orcid.org/0000-0002-5572-3754}{This research was supported by the Israel Science Foundation under Grant 867/19.}
\authorrunning{R. Levi and M. Medina} 
\titlerunning{Dist. Property Testing of Graph Iso. in the \congest\ model} 
\keywords{the CONGEST model, graph isomorphism, distributed property testing, distributed decision,  graph algorithms}
\begin{document}
\setcounter{page}{0}
\ifnum\icalp=0
\title{Distributed Testing of Graph Isomorphism in the \congest\ model}

\author{
Reut Levi\thanks{Efi Arazi School of Computer Science,  The Interdisciplinary Center, Israel.
  Email:~{\tt reut.levi1@idc.ac.il}.}
\and
Moti Medina\thanks{School of Electrical \& Computer Engineering, Ben-Gurion University of the Negev, Israel.
  Email:~{\tt medinamo@bgu.ac.il}. This research was supported by the Israel Science Foundation under Grant 867/19.}
}

\date{}
\fi
\maketitle
\begin{abstract}
In this paper we study the problem of testing graph isomorphism (\GI) in the \congest\ distributed model.
In this setting we test whether the distributive network, $G_U$, is isomorphic to $G_K$ which is given as an input to all the nodes in the network, or alternatively, only to a single node.

We first consider the \emph{decision} variant of the problem in which the algorithm should distinguish the case where $G_U$ and $G_K$ are isomorphic from the case where $G_U$ and $G_K$ are not isomorphic.
Specifically, if $G_U$ and $G_K$ are not isomorphic then w.h.p. at least one node should output \reject\ and otherwise all nodes should output \accept .
We provide a randomized algorithm with $O(n)$ rounds for the setting in which $G_K$ is given only to a single node.
We prove that for this setting the number of rounds of any deterministic algorithm is $\tilde{\Omega}(n^2)$ rounds, where $n$ denotes the number of nodes, which implies a separation between the randomized and the deterministic complexities of deciding \GI .
Our algorithm can be adapted to the semi-streaming model, where a single pass is performed and  $\tilde{O}(n)$ bits of space are used.

We then consider the \emph{property testing} variant of the problem, where the algorithm is only required to distinguish the case that $G_U$ and $G_K$ are isomorphic from the case that $G_U$ and $G_K$ are  \emph{far} from being isomorphic (according to some predetermined distance measure).
We show that every (possibly randomized) algorithm, requires $\Omega(D)$ rounds, where $D$ denotes the diameter of the network. This lower bound holds even if all the nodes are given $G_K$ as an input, and even if the message size is unbounded. We provide a randomized algorithm with an almost matching round complexity of $O(D+(\eps^{-1}\log n)^2)$ rounds that is suitable for dense graphs (namely, graphs with $\Omega(n^2)$ edges).

We also show that with the same number of rounds it is possible that each node outputs its mapping according to a bijection which is an {\em approximated} isomorphism.

We conclude with simple simulation arguments that allow us to adapt centralized property testing algorithms and obtain essentially tight algorithms with round complexity  $\tilde{O}(D)$ for special families of sparse graphs.

\end{abstract}

\clearpage

\section{Introduction}\label{sec:intro}
Testing graph isomorphism is one of the most fundamental computational problems in graph theory.
A pair of graphs $G$ and $H$ are isomorphic if there is a bijection that maps the nodes of $G$ to the nodes of $H$
such that every edge of $G$ is mapped to an edge of $H$ and likewise for non-edges.
Currently, it is not known whether there exists an efficient algorithm for this problem and in fact it is one of the few natural problems which is a candidate for being in NP-intermediate, that is, neither in P nor NP-complete.
In order to obtain efficient algorithms for this problem, various restrictions and relaxations were considered (e.g.~\cite{kelly1957congruence,hopcroft1974linear}).
This problem has been extensively studied also in other computational models such as parallel computation models~\cite{grohe2006testing,luks1986parallel,chen1996graph,jaja1988parallel,verbitsky2007planar,kobler2006graph, gazit1990randomized,chen1994parallel} and in the realm of property testing in which the main complexity measure is the query complexity~\cite{fischer2008testing,DBLP:conf/stoc/OnakS18,DBLP:journals/eccc/Goldreich19b,NS13,KY14,BKN16}.

In the context of distributive models such as the \congest ~\cite{P00} and the \local ~\cite{L92} models, the main complexity measure is the round complexity and the computational power is usually considered to be unbounded.
Therefore in these models the complexity of the problem may change dramatically.
While there seem to be many sensible settings, one of the simplest settings of the problem for distributive models is to test for isomorphism between the distributed network, $G_U$, and a known graph, $G_K$, which is given as an input to all the nodes in the network, or alternatively, only to a subset of the nodes~\footnote{This formulation in which $G_K$ is a parameter falls into the category of massively parameterized problems and is also considered in the setting of property testing~\cite{fischer2008testing,DBLP:journals/eccc/Goldreich19b}.}.
The requirement from the algorithm is that if $G_K$ and $G_U$ are isomorphic, then with high probability~\footnote{We say that an algorithm succeed with high probability, if it succeeds with probability at least $1- 1/n^c$ for any constant $c$ (without changing the round complexity asymptotically.} all nodes should output \accept\ and that at least one node should output \reject\ otherwise.

Since the property of being isomorphic to a specific graph is inherently global, intuitively we expect the round complexity to be $\Omega(D)$ where $D$ denotes the diameter of the network (even for the case in which $G_K$ is given as an input to all the nodes in the network).
As we show, this intuition is correct even for the \local\  model, in which there is no bound on the message size.
Therefore, in the \local\ model, it is not possible to improve over the trivial algorithm that collects the entire information on the network at a single node in $O(D)$ rounds and tests for graph isomorphism in a centralized manner.
In the \congest\ model, in which the message-size is bounded by $O(\log n)$, where $n$ denotes the number of nodes in the network, implementing this trivial solution may require $O(n^2)$ rounds.
This leads to the obvious question whether is it possible to obtain round complexity which is better than $O(n^2)$ in the \congest\ model.

Another interesting question is whether we can obtain better bounds if we relax the decision problem (as considered in the realm of {\em property testing\/})
such that the algorithm is only required to distinguish between pairs of graphs which are isomorphic and pairs of graphs which are {\em far} from being isomorphic (according to some predetermined distance measure).

In this setting we define the problem as follows.
Let $G_U$ be the distributed network and let $m$ denote the number of edges in the network or an upper bound on this number.
We say that a pair of graphs are $\epsilon$-far from being isomorphic if $\epsilon m$ edges need to be deleted/inserted in order to make the graphs isomorphic, where $m$ denotes the number of edges in the network.
An adversarially chosen node, $r$, receives as an input the graph $G_K$ and a proximity parameter $\epsilon\in (0,1)$.
The requirement from the algorithm is as follows.
If $G_K$ and $G_U$ are isomorphic, then w.h.p. all nodes should output \accept.
If $G_K$ and $G_U$ are $\epsilon$-far from being isomorphic, then w.h.p. at least one node should output \no.

\subsection{Our Results}\label{res.sec}
In this section we outline our results. We further elaborate on our results in the next sections.  In all that follows, unless explicitly stated otherwise, when we refer to distributed algorithms, we mean in the \congest\ model.

\subparagraph*{A Decision Algorithm and a Lower Bound for the Decision Problem. }
For the (exact) decision problem we provide a randomized one-sided error algorithm that runs in $O(n)$ rounds and succeeds
\ifnum\icalp=1
with high probability (see Theorem~\ref{thm:exact} in Appendix~\ref{sec:exact}).
\else
with high probability (see Theorem~\ref{thm:exact}).
\fi
The algorithm works even for the setting in which $G_K$ is given only to a single node (that may be chosen adversely).
For this setting we prove that any deterministic algorithm requires $\tilde{\Omega}(n^2)$ rounds, which implies a separation between the randomized and the deterministic complexity of the
\ifnum\icalp=1
decision problem (see Theorem~\ref{thm:detlb} in Appendix~\ref{sec:lb}).
\else
decision problem (see Theorem~\ref{thm:detlb}).
\fi
We note that our algorithm can be adapted to the semi-streaming model~\cite{feigenbauma2005graph} in which it uses $\tilde{O}(n)$ bits of space and performs only one-pass (see Theorem~\ref{thm:sstream}).

\subparagraph*{A Lower Bound for the Property Testing Variant. }
For property testing algorithms we
show that even under this relaxation $\Omega(D)$ rounds are necessary even for constant $\eps$ and constant error probability.
This lower bound holds even in the \local\ model, for two-sided error algorithms, and even if all the nodes receive $G_K$ as
\ifnum\icalp=1
an input (see Theorem~\ref{thm:lb} in Appendix~\ref{sec:lb}).
\else
an input (see Theorem~\ref{thm:lb}).
\fi
It also holds for dense graphs, namely, when $m = \Theta(n^2)$ and for sparse graphs, that is when $m = \Theta(n)$.

\subparagraph*{A Property Testing Algorithm and Computation of Approximated Isomorphism. }
We provide a distributed two-sided error property testing algorithm that runs in $O(D+(\eps^{-1}\cdot\log n)^2)$ rounds and succeeds with high probability for the case that $m = \Theta(n^2)$, implying that our result is tight up to an additive term of $(\eps^{-1}\cdot\log n)^2$ (see Theorem~\ref{thm:test-iso}).
This algorithm works even in the setting in which $G_K$ is given only to a single node.
We note that the graphs that are constructed for the lower bound for the exact variant are dense and have a constant diameter.
Therefore for these graphs, the property testing algorithm runs in only $O((\eps^{-1}\cdot\log n)^2)$ rounds (while the decision algorithm runs in $O(n)$ rounds).

If $G_K$ is given to all the nodes and the graphs are indeed isomorphic then we show that we can also approximately recover the isomorphism with the same round complexity as of testing.
Specifically, each node $v$ outputs $g(v)$ where $g$ is a bijection such that the graph $g(G_U)$, namely the graph in which we re-name the nodes according to $g$, is $\epsilon$-close to $G_K$.

\subparagraph*{Simulation Arguments and their application to special families of sparse graphs. }
Finally, we show, by simple simulation arguments,  that it is possible to obtain essentially tight algorithms with $\tilde{O}(D)$ round complexity for special families of sparse graphs by adapting centralized property testing algorithms.
In particular, these algorithms apply for bounded-degree minor-free graphs and general outerplanar graphs.
\ifnum\icalp=1
(see Appendix~\ref{sec:simulc}).
\fi

\subsection{The Decision Algorithm}
As described above, a naive approach for testing isomorphism to $G_K$ is to gather the entire information on the network at a single node and then to test for isomorphism in a centralized manner.
By the brute-force approach, we may go over all possible bijections between the nodes of the graphs and test for equality between the corresponding graphs.
Our algorithm follows this approach with the difference that it only gathers a compressed version of the network as in the algorithm of Abboud et al.~\cite{DBLP:journals/corr/abs-1901-01630} for the Identical Subgraph Detection problem.
The idea of their algorithm is to reduce the problem of testing if two graphs are equal to the problem of testing equality between a pair of binary strings.
From the fact that the test for equality has a one-sided error, namely it never rejects identical graphs, it follows that our algorithm never rejects isomorphic graphs.
To ensure that our algorithm is sound we amplify the success probability of the equality test and, as a result, obtain a total round complexity of $O(n)$.

\subsection{A Lower Bound for the Decision Problem}
We reduce Set-Equality to the problem of deciding isomorphism in the setting in which only a single node receives $G_K$ as an input (as it is the case for our upper bound).
The idea is to construct a graph $G_{x, y}$ over $n$ nodes for every pair of strings $x, y \in \{0,1\}^k$  where $k = \Theta(n^2)$ such that $G_{x, y}$ is isomorphic to $G_{x', y'}$ if and only if $x = x'$ and $y = y'$.
Let $x$ and $y$ denote the input of Alice and Bob, respectively. In the reduction, $G_K$ is known to Alice and is taken to be $G_{x,x}$. Alice and Bob simulate the distributed algorithm on the graph $G_{x, y}$, which by construction is isomorphic to $G_{x, x}$ if and only if $x = y$, as desired.
This reduction yields a lower bound of $\Omega(n^2/\log n)$ rounds for any deterministic algorithm.

\subsection{A High-Level Description of the Property Testing Algorithm}

Our algorithm closely follows the approach taken by Fischer and Matsliah~\cite{fischer2008testing} for testing graph isomorphism in the dense-graph model~\cite{goldreich1998property} with two sided-error.
However, in order to obtain a round complexity which only depends poly-logarithmically in $n$ (rather than a dependency of $\tilde{O}(\sqrt{n})$ as the query complexity in~\cite{goldreich1998property}), we need to diverge from their approach as described next.

\subsubsection{The Algorithm of Fischer-Matsliah}
\sloppy
The algorithm of Fischer-Matsliah begins with picking, u.a.r., a sequence of $s = \poly(\eps^{-1}, \log(n))$ nodes from the unknown graph.
The selection of these nodes induces labels for each node in the graph as follows.
The label of each node $v$ is a string of $s$ bits where the $i$-th bit indicates whether $v$ is a neighbor of the $i$-th node in the sequence.
This labeling scheme guarantees that, with high probability, only ``similar'' nodes, that is, nodes with similar sets of neighbors, might have identical labels.
It is not hard to see that if the graphs are isomorphic, then given that we managed to map the nodes in the sequence according to the isomorphism, both graphs should have the same frequency over labels.
More surprisingly, it is shown by Fischer and Matsliah that if the nodes in the sequence are mapped according to the isomorphism then it is possible to extend this mapping on-the-fly and obtain, roughly speaking, an approximate isomorphism.
In particular, they showed that as long as each node in the graph is mapped to a node with the same label in the other graph (with respect to the mapped sequence), then the obtained function is close to being an isomorphism.
This is due to the fact that nodes which are too ``different'' are likely to have different labels and the fact that similar nodes are exchangeable.
Given a candidate for the approximate isomorphism, the problem is then reduced to testing closeness of graphs.
Therefore, if the graphs are isomorphic then by going over all possible mappings of the selected sequence (there are only quasi-polynomial many ways to map these nodes) and extending this partial mapping as described above, one should be able to obtain a function, $f$, which is close to being an isomorphism.
On the other hand, if the graphs are far from being isomorphic then by definition any bijection gives two graphs which are far from each other.
Therefore, these two cases can be distinguished by approximating the Hamming distance of the corresponding adjacency-matrices.
In turn, this can be done by selecting random locations (that is, potential edges) and checking the values of both matrices in these locations.
Since constructing $f$ entirely would be too costly, one needs to be able to generate $f$ on-the-fly.
A crucial point is that its generation can not depend on the selection of the random locations.
In other words, its generation should be query-order-oblivious.
To this end, in the algorithm of Fischer-Matsliah they first test if the distributions over the labels are close.
If so, they can safely generate $f$ on-the-fly while ensuring that there is only little dependency between $f$ and the queries the algorithm makes to $f$.
This is done by simply mapping a node $v$ to a random node in $G_K$ that have the same label as $v$.~\footnote{The little dependency between $f$ and the queries that the algorithm makes to $f$ comes from the fact that the frequencies of labels of the two graphs are not necessarily identical as they are only guaranteed to be close (w.h.p.).}
The query complexity of testing closeness of distributions, which is $\tilde{O}(\sqrt{n})$, dominates the query complexity of the algorithm.
As shown in~\cite{fischer2008testing}, in the centralized setting this algorithm is essentially tight.

\subsubsection{Our Algorithm}
In the \congest\ model, by straight-forward simulation arguments it follows that one can simulate the algorithm of Fischer-Matsliah in $\tilde{O}(D + \sqrt{n})$ rounds by collecting the answers to the queries of the algorithm at a single node and simulating the centralized algorithm (see Claim~\ref{clm:sim}).
A crucial observation for improving this bound is that nodes that have the same label also have at least one neighbor in common (with the only exception of the all-zero label), therefore they can be coordinated by one of their common neighbors.
Moreover, it is possible to obtain both samples and access to the frequencies of the labels via these coordinators.
Our algorithm proceeds as follows. As in~\cite{fischer2008testing} a sequence $C$ of $s$ random nodes is selected and is sent to the entire network (in $O(D)$ rounds).
Each node figures out its label and broadcasts this label to its neighbors.
The node $r$, that received $G_K$ as an input, selects a random set of potential edges $(i_1, j_1), \ldots, (i_k, j_k)$, where $k = \poly(\log n, \eps^{-1})$. It then broadcasts this set to the entire network.
For each potential edge, the information whether it is an actual edge in the network is sent to $r$.
For each label of a node in $I =
\{i_1, j_1, \ldots, i_k, j_k\}$, the corresponding coordinator sends to $r$ the frequency of this label.
From this point the rest of the computation is done centrally at $r$.
We say that a sequence, $P$, of nodes in $G_K$ is {\em good} with respect to $C$ and $I$ if it induces the same frequency of labels as $C$ when restricted to labels of nodes in $I$.
The node $r$ goes over all possible mappings of $C$ to the known graph and looks for good sequences (with respect to $C$ and $I$).
For every good sequence, $P$, $r$ generates a function $f$ on-the-fly: on query $v$, $f$ maps $v$ to a random node in $V_K$ which is still unmatched and has the same label as $v$ (w.r.t. $P$).
As we show, from the fact that the sequence is good it follows that $f$ is query-order-oblivious.
Let $f(G_U)$ denotes the graph obtained from $G_U$ after applying $f$ on $V_K$.
As in the algorithm of Fischer-Matsliah, if the graphs are isomorphic and the sequence $P$ is the mapping of $C$ according to the isomorphism, then $f(G_U)$ is guaranteed to be (w.h.p.) $\epsilon$-close $G_K$.
The set of potential edges is then used to approximate the distance between $G_K$ and $f(G_U)$.
This allows us to obtain a significant improvement in the round complexity (over the straight-forward simulation), in terms of $n$, from $\tilde{O}(\sqrt{n})$ to $O(\log^2 n)$.
\label{subsec-intro-high-level}

\subsection{High-level Approach for Computing an Approximated Isomorphism}
As described in the previous section, if the graphs are isomorphic then w.h.p. the algorithm finds a sequence $P$ that corresponds to a bijection $f$ such that $f(G_U)$ is guaranteed to be (w.h.p.) $\epsilon$-close to $G_K$.
The algorithm accesses $f$ only on a small set of random locations.
It is tempting to try to output $f(v)$ for every $v$ in the network.
Assume now that every node in the network knows $G_K$.
If the sequence $P$ is indeed the mapping of $C$ according to an isomorphism then the following naive approach should work.
Each coordinator can independently map to $V_K$ the nodes that are assigned to it according to their labels.
However, it might be the case that $P$ is not the mapping of $C$ according to any isomorphism (although it passed the test).
In particular it might be that it is not good with respect to $C$ and $V_U$ (recall that $P$ is good w.r.t. $C$ and $I$).
In this case we may want the coordinators of the nodes to be coordinated such that they exchange the mapping of nodes with ``underflow'' and ``overflow'' labels.
Since there might be $O(n)$ labels, such coordination might cause too much congestion.
To this end we cluster the labels according to their most significant bit and assign a single coordinator to each cluster.
Since there are only $\poly(\eps^{-1}, \log(n))$ many clusters, these coordinators can coordinate without causing too much congestion.
The main technicality that needs to be addressed is showing that the resulting mapping, $g$, is close enough to $f$ and hence is an approximated isomorphism.
We prove this by coupling $g$ and $f$ and showing that they agree on the mapping of most nodes.

\subsection{A Lower Bound for the Property Testing Variant}
We prove that for any $D$ there exists a family of graphs with diameter $\Theta(D)$ such that any distributed two-sided error property testing algorithm for testing isomorphism on this family of graphs
requires $\Omega(D)$ rounds.
In  the construction we start with a pair of graphs $G_1$ and $G_2$ that have diameter $O(D)$ which are far from being isomorphic.
The graph $G_U$ is then defined to be composed of $G_1$ and $G_2$ and a path of length $\Theta(D)$ that connects the two graphs.
Roughly speaking, the idea is to argue that for round complexity which is at most $D/c$, where $c$ is some absolute constant, the nodes in $G_U$ which belong to the side of $G_1$ cannot distinguish the case in which the network is composed of two graphs which are isomorphic to $G_1$ (connected by a path). Likewise for the nodes that belong to the side of $G_2$ (that cannot distinguish the case in which the network is composed of two graphs which are isomorphic to $G_2$).
It then follows that the algorithm must err.
In the detailed proof, which appears in the appendix, there are some technicalities that need to be addressed in order to prove that the above argument still holds when the nodes may use randomness, port numbers and IDs.
\def\local{{\sf LOCAL}}

\subsection{Related work}
In this section we overview results in distributed decision and property testing in the \congest\ model.
We also overview related results in centralized property testing.

\subparagraph*{Distributed Decision.}
There is a large body of algorithms and lower bounds for the  \emph{subgraph detection problem}: given a fixed graph $H$, and an input graph $G$, the task is to decide whether $G$ contains a subgraph which is isomorphic to $H$.
The subgraphs considered include: paths~\cite{korhonen2018deterministic}, cycles~\cite{korhonen2018deterministic,fischer2018possibilities,eden2019sublinear}, triangles~\cite{izumi2017triangle,abboud2017fooling,chang2019distributed}, cliques~\cite{czumaj2019detecting,eden2019sublinear,DBLP:conf/icalp/BonneC19}.
Abboud et al.~\cite[Sec.~6.2]{DBLP:journals/corr/abs-1901-01630} considered the
\emph{identical subgraph detection problem}. In this problem the graph's nodes are partitioned into two equal sets. The task is to decide whether the induced graphs on these two sets are identical w.r.t. to a fixed mapping between the nodes of these two sets. They showed an $\Omega(n^2)$ lower bound on the number of rounds of any deterministic algorithm and a randomized algorithm that performs $O(D)$ rounds which succeeds w.h.p.

\subparagraph*{Distributed Property Testing for Graph Problems.}
Distributed property testing was initiated by Censor-Hillel et al.~\cite{CFSV16}.
In particular, they designed and analyzed distributed property testing algorithms for: triangle-freeness, cycle-freeness, and bipartiteness. They also proved a logarithmic lower bound for the latter two properties.
While they mainly focus on the bounded degree model and the general model they also studied the dense model. In this model they showed that for a certain class of problems, any centralized property testing algorithm can be emulated in the distributed model such that number of rounds is $q^2$ where $q$ denotes the number of queries made by the centralized tester.
Fraigniaud et al.~\cite{FRST16} studied distributed property testing of excluded subgraphs of size $4$ and $5$.
Since the appearance of the above papers, there was a
fruitful line of research in distributed property testing for various properties, mainly focusing on properties of whether a graph excludes a fixed sub-graph
~\cite{FO17,FGO17c,FMORT17,ELM17,DISC17,FGO17a}.
Other problems on graphs such as testing planarity, and testing the conductance was studied in~\cite{DBLP:conf/podc/LeviMR18,FY17}, respectively.


\subparagraph*{Centralized Property Testing. }
Fischer and Matsliah~\cite{fischer2008testing} studied the graph isomorphism problem in the dense-graph model~\cite{goldreich1998property}.\footnote{In the dense-graph model, a graph $G$ is considered to be $\eps$-far from a property $\Pi$ if the symmetric difference between its edge set to the edge set of any graph in $\Pi$ is greater than $\eps |V(G)|^2$.}
They considered four variations of the Graph Isomorphism testing problem: (1)~one-sided error, where one of the graphs is known, and there is a query access to the graph which is tested, i.e., the tested graph is ``unknown'', (2)~one-sided error, where there is a query access for both graphs, i.e., both graphs are unknown, (3)~two-sided error, where one graph is known, (4)~ two sided error, where both graphs are unknown. For the first three variants Fischer and Matsliah~\cite{fischer2008testing} showed (almost) matching lower and upper bounds of, respectively: (1)~$\tilde{O}(n)$, $\Omega(n)$, (2)~$\tilde{O}(n^{3/2})$, $\Omega(n^{3/2})$, and  (3)~$\tilde{O}(n^{1/2})$, $\Omega(n^{1/2})$, where $n$ is the number of vertices of each input (known or unknown) graph. For the fourth variant they showed an upper-bound of $\tilde{O}(n^{5/4})$ and a lower-bound of $\Omega(n)$. Onak and Sun~\cite{DBLP:conf/stoc/OnakS18} improved the upper bound of the fourth case to $O(n)\cdot 2^{\tilde{O}\left(\sqrt{\log n}\right)}$ by bypassing the distribution testing reduction that was used by~\cite{fischer2008testing}.
Property testing of graph isomorphism was also considered in the bounded-degree model~\cite{GR02}\footnote{In the bounded-graph model, a graph $G$ with maximum degree $d$, is considered to be $\eps$-far from a property $\Pi$ if the symmetric difference between its  edge set to the edge set of any graph in $\Pi$ is greater than $\eps d|V(G)|$ }.
Goldreich~\cite{DBLP:journals/eccc/Goldreich19b} proved that the query complexity of any property testing algorithm is at least $\tilde{\Omega}(n^{1/2})$, for the variant in which one graph is known, and $\tilde{\Omega}(n^{2/3})$ when both graphs are unknown.
Newman and Sohler~\cite{NS13} provide an algorithm for minor-free graphs with degree bounded by $d=O(1)$ (this class includes for example bounded degree planar graphs) whose query complexity is independent of the size of the graph. Moreover, they showed that any property is testable in this class of graphs with the same query complexity.
Kusumoto and Yoshida~\cite{KY14}, and Babu, Khoury, and Newman~\cite{BKN16} considered testing of isomorphism between graphs which are forests and outerplanar in the general model~\cite{KKR04}\footnote{In the general-graph model, a graph $G$, is considered to be $\eps$-far from a property $\Pi$ if the symmetric difference between its  edge set to the edge set of any graph in $\Pi$ is greater than $\eps |E(G)|$.}, respectively. They both proved an upper bound of $\poly\log n$ and a lower bound of $\Omega(\sqrt{\log n})$ was shown in~\cite{KY14}. Moreover, they proved that any graphs property is testable on these family of graphs with $\poly\log n$ queries.

\section{The Algorithm for Testing Isomorphism in Dense Graphs}\label{sec:iso-test}
In this section, we describe and analyze the distributed algorithm for testing graph isomorphism in dense graphs.
We begin with several useful definitions and observations, followed by the listing Algorithm~\ref{alg:testing} and the proof of its correctness (which follows from Lemma~\ref{lemma:accepts}
and Lemma~\ref{lemma:reject}). Finally we discuss in more details how the algorithm is implemented in the \congest\ model.

We establish the following theorem.
\begin{theorem}\label{thm:test-iso}
There exists a distributed two-sided error property testing algorithm for testing isomorphism (of dense graphs) that runs in
$O(D+(\eps^{-1}\log n)^2)$ rounds in the \congest\ model. The algorithm succeeds with high probability.
\end{theorem}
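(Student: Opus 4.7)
The plan is to follow the Fischer--Matsliah template but replace the costly distribution-closeness test (which forces $\tilde{O}(\sqrt{n})$ queries, and hence rounds, in a naive simulation) by a cheaper filter that suffices for this application. Concretely, we pick $s=\poly(\eps^{-1},\log n)$ random landmark nodes in $G_U$ via a simple broadcast/election from $r$, costing $O(D+s)$ rounds. Each node $v\in V_U$ learns which of the $s$ landmarks lie in its neighborhood from one round of local exchange, so $v$ can compute its $s$-bit label $L_C(v)$. Crucially, as long as the label is not the all-zeros string, $v$ shares an edge with at least one landmark, so that landmark can act as a \emph{coordinator} for all nodes with that label. This observation is what lets us avoid paying for distribution testing: label frequencies and uniform samples of a label class can be computed locally at the coordinator in $\tilde{O}(1)$ rounds.

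Next, $r$ draws a set $I=\{(i_1,j_1),\dots,(i_k,j_k)\}$ of $k=\poly(\eps^{-1},\log n)$ random potential edges (pairs of node IDs chosen u.a.r.\ from $[n]\times[n]$) and broadcasts $I$ in $O(D+k)$ rounds. For each pair, the two endpoints report to $r$ (via the BFS tree) whether the edge is present in $G_U$; and for every distinct label appearing among the $2k$ endpoints in $I$, the corresponding coordinator sends the frequency of that label to $r$. Because there are only $O(k)$ distinct labels involved and each piece of information is $O(\log n)$ bits, standard pipelined broadcast/convergecast on a BFS tree handles all this in $O(D+k)=O(D+(\eps^{-1}\log n)^2)$ rounds, and the bottleneck is exactly the dependence on $\eps^{-1}\log n$ squared.

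From this point the computation at $r$ is purely centralized: $r$ knows $G_K$, the labels $L_C$ induced on the landmarks' images for each candidate mapping $\pi:C\to V_K$, the frequencies of labels in $G_U$ restricted to $I$'s endpoints, and the actual $G_U$-values at the $k$ potential edges. For every candidate $\pi$ which is \emph{good} with respect to $C$ and $I$ (the restricted label frequencies agree), $r$ generates an on-the-fly bijection $f$: when queried on an endpoint $v$ from $I$, $f(v)$ is a uniform unused node in $V_K$ with the same label as $v$ under $\pi$. The key invariant, inherited from Fischer--Matsliah, is that goodness implies $f$ is query-order-oblivious on $I$, so sampling $k$ random potential edges gives an unbiased estimate of the Hamming distance between $G_K$ and $f(G_U)$ up to $\eps/2$ additive error with high probability. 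If the graphs are isomorphic then the ``true'' mapping of $C$ yields a good sequence and hence, w.h.p., an $f$ with $f(G_U)$ that is $(\eps/2)$-close to $G_K$; if the graphs are $\eps$-far from isomorphic, every bijection yields distance $>\eps m$, so every candidate $f$ is rejected. Accepting iff \emph{some} good $\pi$ passes the Hamming check yields two-sided error.

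The main obstacle, and the place where the argument is not just bookkeeping, is bounding the number of rounds charged to \emph{repeating} the centralized Fischer--Matsliah analysis across all candidates $\pi$: there are quasi-polynomially many mappings of $C$ to $V_K$, but all of them are processed locally at $r$ after the $O(D+(\eps^{-1}\log n)^2)$-round data collection has finished, so they incur no communication. Thus the only quantity we must verify is that the sample size $k=\Theta((\eps^{-1}\log n)^2)$ is simultaneously large enough to (i)~make every good-but-``bad'' $\pi$ fail the Hamming test w.h.p., after a union bound over the quasi-polynomially many candidates, and (ii)~to ensure every coordinator in $I$ has a \emph{faithful} label-frequency estimate, again via a union bound over the $O(k)$ involved label classes. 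Both are standard Chernoff/Hoeffding computations given the parameter settings in Fischer--Matsliah, and combined with the round count from the three pipelined phases above yield the claimed $O(D+(\eps^{-1}\log n)^2)$ bound with high probability.
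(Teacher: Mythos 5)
Your proposal is correct and follows essentially the same route as the paper: the same landmark labels, the same coordinator observation for non-all-zero labels, the same restriction of label-frequency checks to the endpoints of the sampled potential edges, the same query-order-oblivious on-the-fly bijection, and the same union bound over the $|V_K|^s$ candidate mappings driving the choice $k=\Theta((\eps^{-1}\log n)^2)$. The only cosmetic difference is that you speak of ``faithful label-frequency estimates'' at the coordinators, whereas in the paper these counts are computed exactly (each coordinator simply counts its neighbors with the given label), so no concentration argument is needed there.
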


\subsection{Definitions and Notation}
We shall use the following definitions in our algorithm and in its analysis.

\medskip\noindent
Let $G$ be a graph and let $C = (c_1, \ldots, c_s)$ be a sequence of $s$ nodes from $V(G)$.
\begin{definition}[\cite{fischer2008testing}]\label{def1}
For every node $v\in V(G)$, the $C$-label of $v$ in $G$, denoted by $\ell^G_C(v)$, is a string of $s$ bits defined as follows: $$\ell^G_C(v)_i=1 \Leftrightarrow c_i \in N_G(v)\:,$$
where $N_G(v)$ denotes the neighbors of $v$ in $G$.
\end{definition}
%
\medskip\noindent
We use the $\triangle$-operator to denote both the symmetric difference between two sets and when applied on graphs it denotes the Hamming distance between the corresponding adjacency matrices.
\begin{definition}[\cite{fischer2008testing}]\label{def-sep}
For $\beta \in (0,1]$, we say that $C$ is $\beta$-separating if for
every pair of nodes $u, v$ such that $|\triangle(N(u), N(v))| \geq \beta n$ it holds that $u$ and $v$
have different $C$-labels in $G$.
\end{definition}
\begin{definition}[inverse of $\ell^G_C$]
For a label $x \in \{0,1\}^s$, define $S^G_C(x) \triangleq \{v\in V(G) : \ell^{G}_{C}(v) = x\}$. Namely, $S^G_C(x)$ is the set of nodes in $G$ for which the $C$-label is $x$.
\end{definition}
%
%
Let $G$ and $H$ be a pair of graphs such that $|V(G)| = |V(H)|$.
The following definitions are defined with respect to a pair of sequences of $s\geq 1$ nodes from $V(G)$ and $V(H)$, $C_G = (c^G_1, \ldots, c^G_s)$ and $C_H = (c^H_1, \ldots, c^H_s)$, respectively.

\medskip\noindent
We next define what we mean by saying that the mapping of a function $f:V(G)\rightarrow V(H)$ is consistent w.r.t. the labels of $C_G$ and $C_H$.

\begin{definition}\label{def2}
For $f:V(G)\rightarrow V(H)$ which is a bijection, we say that $f$ is {\em $(C_G, C_H)$-label-consistent} if
the following holds:
\begin{enumerate}
\item $f$ maps $C_G$ to $C_H$: $f(c^G_i) = c^H_i$ for every $i\in [s]$.
\item The label of a node and its image is the same: $\ell^{G}_{C_G}(v) = \ell^{H}_{C_H}(f(v))$ for every $v\in V(G)$.
\end{enumerate}
\end{definition}

\medskip\noindent
For $f:V(G)\rightarrow V(H)$ and a sequence $C = (c_1, \ldots, c_s)$, we define $f(C)$ to denote $(f(c_1), \ldots, f(c_s))$ and $f(G)$ to denote the graph whose nodes are $V(H)$ and its edge set is $\{\{f(u), f(v)\}: \{u, v\} \in E(G)\}$.

We next observe that if $G$ and $H$ are isomorphic than for any sequence $C$ and any function $f$ which is an isomorphism between $G$ and $H$, $f$ is consistent w.r.t. $C$ and $f(C)$.
\begin{observation}~\label{lemma:same}
If $G$ and $H$ are isomorphic and $\pi$ is an isomorphism from $V(G)$ to $V(H)$ then for every sequence of nodes , $C$, from $V(G)$, $\pi$ is $(C, \pi(C))$-label consistent.
In particular, $|S^G_{C}(x)| = |S^H_{\pi(C)}(x)|$ for every $x \in \{0,1\}^s$.
\end{observation}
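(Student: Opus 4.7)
The plan is to unwind the definitions and observe that the statement is essentially automatic from the fact that $\pi$ preserves adjacency. I would organize the proof in three short steps.

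First, I would verify condition~(1) of Definition~\ref{def2}, that $\pi$ maps $C$ to $\pi(C)$ coordinate-wise. This is immediate: by the very definition of $f(C)$ applied to $f = \pi$, the $i$-th entry of $\pi(C)$ is $\pi(c_i)$, so $\pi(c_i)$ equals the $i$-th entry of $\pi(C)$ for every $i \in [s]$.

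Second, I would verify condition~(2), the label-preservation property. Fix $v \in V(G)$ and $i \in [s]$. Unwinding Definition~\ref{def1},
\begin{equation*}
\ell^G_C(v)_i = 1 \iff c_i \in N_G(v) \iff \{v, c_i\} \in E(G).
\end{equation*}
Since $\pi$ is an isomorphism, $\{v, c_i\} \in E(G) \iff \{\pi(v), \pi(c_i)\} \in E(H)$, which by Definition~\ref{def1} applied in $H$ with the sequence $\pi(C)$ is equivalent to $\ell^H_{\pi(C)}(\pi(v))_i = 1$. Therefore the two label strings agree in every coordinate, giving $\ell^G_C(v) = \ell^H_{\pi(C)}(\pi(v))$.

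Third, for the ``in particular'' clause, I would observe that $\pi$, being an isomorphism, is in particular a bijection $V(G)\to V(H)$, and by the label-preservation just established it sends $S^G_C(x)$ into $S^H_{\pi(C)}(x)$ for every $x\in\{0,1\}^s$. Applying the same argument to $\pi^{-1}$ (which is also an isomorphism, between $H$ and $G$, with $\pi^{-1}(\pi(C)) = C$) yields the reverse inclusion, so $\pi$ restricts to a bijection between $S^G_C(x)$ and $S^H_{\pi(C)}(x)$, giving $|S^G_C(x)| = |S^H_{\pi(C)}(x)|$. There is no real obstacle here; the only thing to be careful about is not to confuse the two roles of $\pi(C)$ (as a sequence used to define labels in $H$, versus as the image of a sequence under the isomorphism), which the notation already handles cleanly.
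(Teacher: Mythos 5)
Your proof is correct, and it is exactly the routine unwinding of Definitions~\ref{def1} and~\ref{def2} that the paper has in mind --- the paper states this as an observation without proof precisely because the argument is this direct consequence of $\pi$ preserving adjacency. Nothing is missing; the handling of the ``in particular'' clause via $\pi^{-1}$ is clean and complete.
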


If $f$ is not an isomorphism then it might be the case that it is not consistent w.r.t. $C$ and $f(C)$. We next define a weaker notion of consistency which is being maximally-label-consistent.

\begin{definition}\label{def3}
We say that a function $f:V(G)\rightarrow V(H)$ is {\em maximally $(C_G, C_H)$-label-consistent} if
the following holds:
\begin{enumerate}
\item $f$ is a bijection.
\item $f$ maps $C_G$ to $C_H$: $f(c^G_i) = c^H_i$ for every $i\in [s]$.
\item For every $x \in \{0,1\}^s$ such that $|S^G_{C_G}(x)| = |S^H_{C_H}(x)|$, $f$ maps the elements of $S^G_{C_G}(x)$ to the elements of $S^H_{C_H}(x)$.
\end{enumerate}
\end{definition}

\ifnum\icalp=1
See Figure~\ref{fig:labels} in Appendix~\ref{app:figs} for illustration for the definitions in this section.
\else
See Figure~\ref{fig:labels} for illustration for the definitions in this section.
\ifnum\icalp=1
\section{Figures}\label{app:figs}
\fi
\begin{figure}[h!]
    \centering
    \begin{subfigure}[b]{0.4\textwidth}
        \centering
        \includegraphics[width=\textwidth]{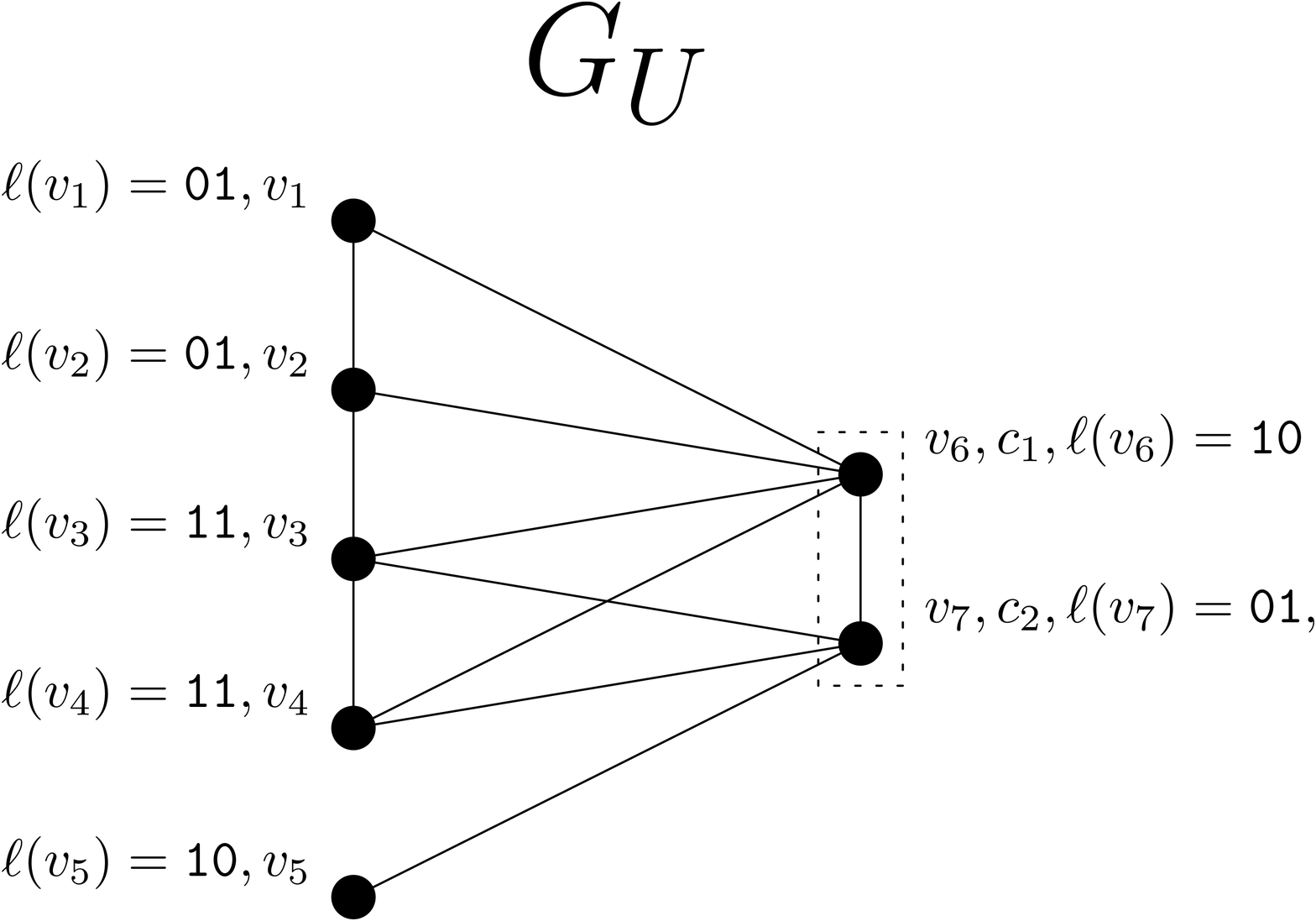}
        \caption{}
        \label{fig:Gu}
    \end{subfigure}
    \quad
    \begin{subfigure}[b]{0.4\textwidth}
        \centering
        \includegraphics[width=\textwidth]{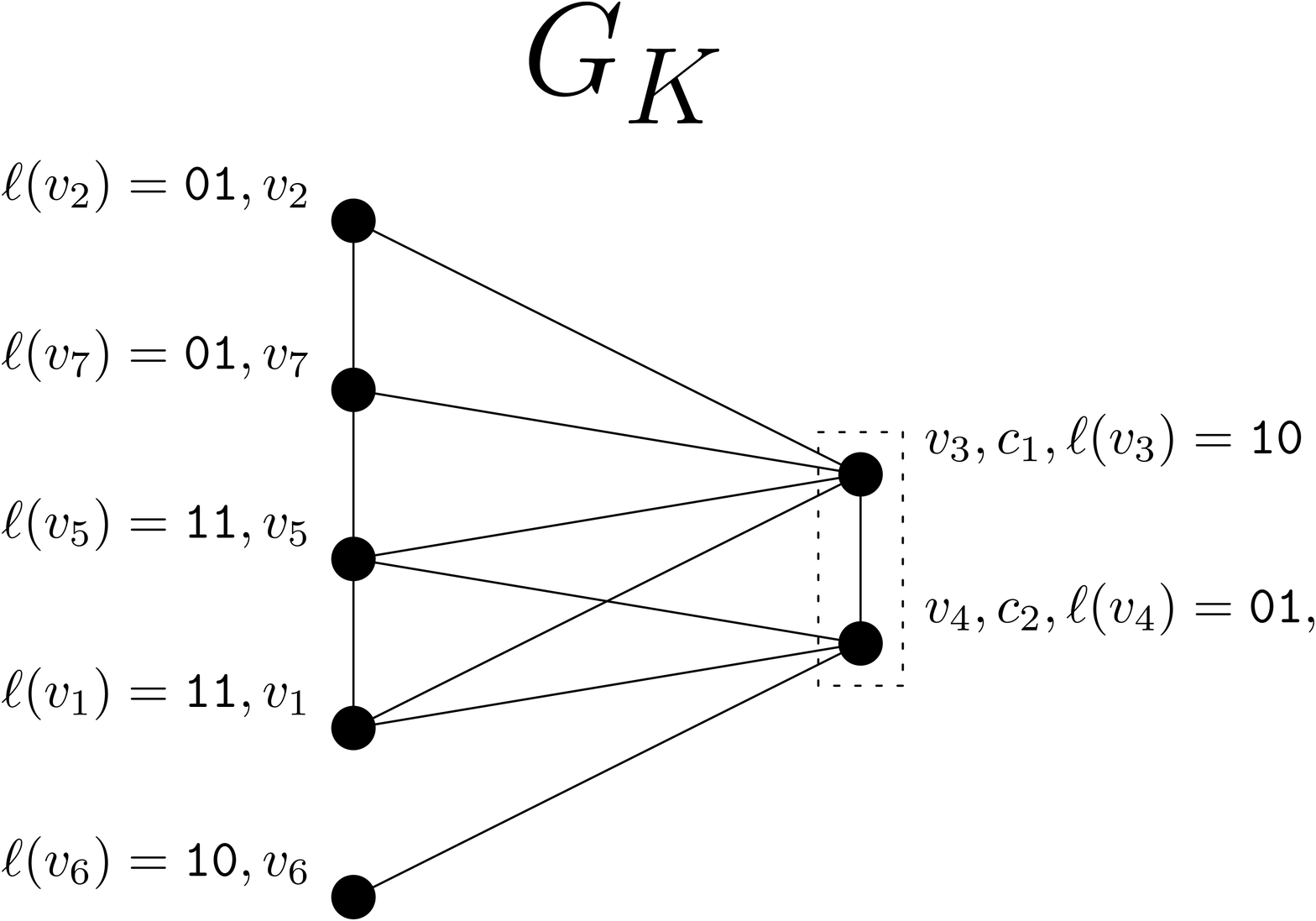}
        \caption{}
        \label{fig:Gk}
    \end{subfigure}
    \\
    \begin{subfigure}[b]{0.3\textwidth}
        \centering
        \includegraphics[width=\textwidth]{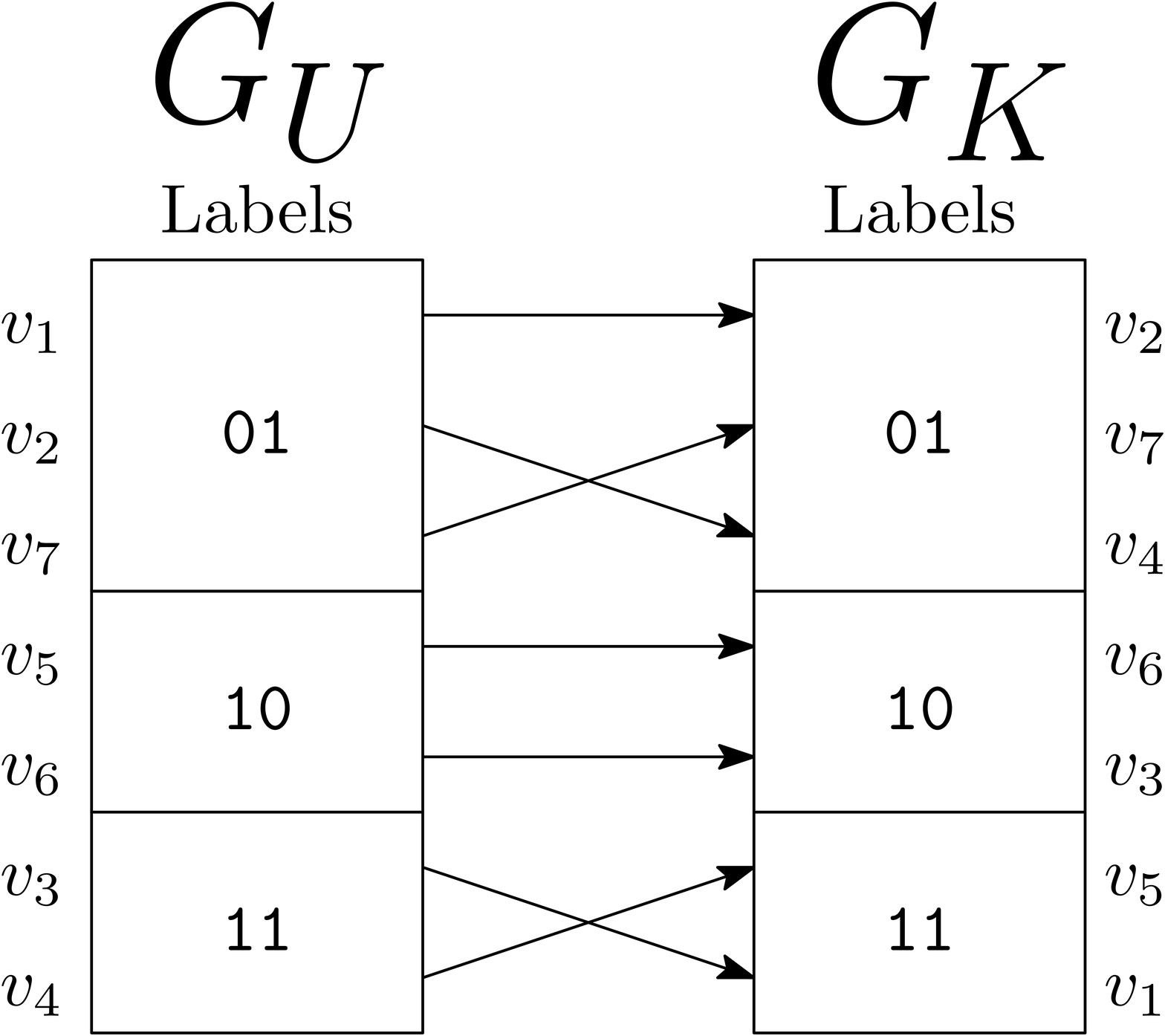}
        \caption{}
        \label{fig:mapping}
    \end{subfigure}
    \caption{
The unknown $G_U$ and known $G_K$ graphs are depicted in Sub-figure~\ref{fig:Gu}, and~\ref{fig:Gk}, respectively. The $C$ sequence consists of the nodes $c_1=v_6$ and $c_2=v_7$. Hence, the label of a node $v$ is a binary string $\ell_C^{G_U}=(b_2,b_1)\in\{0,1\}^2$, where $b_i=1 \Leftrightarrow c_i \in N(v)$, and the same for $G_K$. In the figures, we removed the super (sub) scripts as they are clear from the context. The induced labels are depicted next to the node number. Note that all the nodes have a neighbor in $C$, hence the label class ${\tt 00}$ is empty. In Sub-figure~\ref{fig:mapping} two tables are depicted: (1)~label assignment to the nodes of $G_U$, and (2)~label assignment to the nodes of $G_K$. The order of which the nodes are presented is according to the (single) isomorphism function between the two graphs. One can observe that the number of nodes per each corresponding label-classes, e.g., a buckets, is the same, as well as buckets which have the same ${\rm msb}$ bit, e.g., a cluster of labels. A random bijection that preserves the bucket of labels is depicted by the arrows. We show in Thm.~\ref{thm:findiso} that such a bijection yields graphs which are close to being isomorphic.}
\label{fig:labels}
\end{figure} 
\fi

\subsection{Distributed Algorithm Description}
The listing of the distributed algorithm appears in Algorithm~\ref{alg:testing}. The detailed description of the distributed implementation of Algorithm~\ref{alg:testing} appears in Section~\ref{sec:detailed}.

\begin{algorithm}[ht]
\DontPrintSemicolon
  \KwInput{A``known'' graph, $G_K = (V_K,E_K)$, that is an input to a single node $r$ (may be chosen adversarially).}
  \KwOutput{with high probability, all nodes output \yes\ if $G_K$ is isomorphic to $G_U$ and \no\ otherwise.}
\medskip
Compute a BFS tree, $T$, in $G_U$ rooted at $r$.\label{step0}\;
Pick, u.a.r., a sequence of $s\triangleq \Theta(\eps^{-1}\log |V_U|)$ nodes in $G_U$. Let $C\triangleq (c_1,\ldots,c_{s})$ denote this sequence.\label{step1}\;
Each node $v\in V(G_U)$ computes its label, $\ell^{G_U}_C(v)\in \{0,1\}^s$, according to $C$ and its neighbors in $G_U$ (see Definition~\ref{def1}), and sends this label to its neighbors. \label{stepeach}\;
%
%
The node $r$ picks a sequence of $t\triangleq \Theta(\eps^{-1}\log(|V_K|^{s}))$ pairs of nodes, $A = ((i_1, j_1),\ldots,(i_t, j_t))$, u.a.r. from $V_U\times V_U$.
Let $I \eqdef \{i_1, j_1, \ldots, i_t, j_t\}$. \;
For every $e \in A$, $r$ sends $e$ down the BFS tree and learns whether $e \in E_U$ or not.
Similarly, for every $v\in I$, it learns $\ell^{G_U}_C(v)$, i.e., the $C$-label of $v$ in $G_U$.\;
For each sequence, $P = (p_1, \ldots, p_s)$, of $s$ nodes from $V_K$, $r$ proceeds as follows:\;\label{testp}
\begin{enumerate}
\item For every $i\in [s]$, verify that $\ell^{G_U}_C(c_i) = \ell^{G_K}_P(p_i)$. If not, then \textbf{reject} $P$ as a candidate and proceed to the next sequence.\label{testp-another}\;
\item For every $v\in I$, let $\ell$ denote $\ell^{G_U}_C(v)$. Check if $|S^{G_U}_C(\ell)| =  |S^{G_K}_{P}(\ell)|$.\\
    If not, then \textbf{reject} $P$ as a candidate and proceed to the next sequence.\label{testp-1}\;
\item Pick uniformly at random a function $f$ from the set of all functions that are \\maximally $(C, P)$-label-consistent (see Definition~\ref{def3}).\label{testp-2}\;
\item Compute the number of edges in $A$ which are non-edges in $f(A)$ and vice-versa. \\That is, the number of edges $(i_k, j_k) \in A$ such that $(i_k, j_k) \in E_U$ and \\$(f(i_k), f(j_k)) \notin E_K$, or $(i_k, j_k) \notin E_U$ and $(f(i_k), f(j_k)) \in E_K$). \\If it is at most $(3\eps) |A|/2$ then \textbf{return}  \yes.\label{step-sym}\;
\end{enumerate}
\Indm
If all sequences, $P$, failed to pass the previous step then \textbf{return} \no.
%
%
\caption{Testing Isomorphism: The distributed network is $G_U=(V_U,E_U)$.}\label{alg:testing}
\end{algorithm}
\subsection{Correctness of the Distributed Testing Algorithm}

In this subsection we prove the correctness of our algorithm. We begin with a couple of claims and lemmas that we use in our proof. Missing proofs are deferred to Appendix~\ref{sec:mpfs}.

\medskip
\noindent
The proof of the following claim appears in~\cite{fischer2008testing}. The proof of Lemma~\ref{lem-close} can be derived from the proof of Lemma 4.11 in~\cite{fischer2008testing} 
(for the sake of completeness we provide both proofs in the appendix).

\begin{claim}[\cite{fischer2008testing}]\label{clm-sep}
For $\beta \in (0,1]$ and a sequence, $C$, of $s = \Theta(\log(n/\delta)/\beta)$ nodes, chosen uniformly
at random, $C$ is $\beta$-separating with probability at least $1 - \delta$.
\end{claim}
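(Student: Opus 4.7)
The plan is a standard ``fix a pair, apply the union bound'' argument. I would first fix a pair of nodes $u, v \in V(G)$ with $|\triangle(N(u), N(v))| \geq \beta n$ and analyze the probability that a single uniformly random node $c$ fails to distinguish them. By definition of the $C$-label, the bit of $\ell^G_C(\cdot)$ contributed by $c$ differs between $u$ and $v$ exactly when $c$ lies in the symmetric difference $\triangle(N(u), N(v))$. Hence the probability that $c$ fails to distinguish $u$ and $v$ is
\[
1 - \frac{|\triangle(N(u), N(v))|}{n} \;\leq\; 1 - \beta.
\]

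I would then lift this to the whole sequence $C = (c_1, \ldots, c_s)$ by independence: the probability that none of the $s$ samples distinguishes $u$ from $v$ is at most $(1-\beta)^s \leq e^{-\beta s}$. Choosing
\[
s \;=\; \frac{1}{\beta}\bigl(2\ln n + \ln(1/\delta) + O(1)\bigr) \;=\; \Theta(\log(n/\delta)/\beta),
\]
this bound is at most $\delta / \binom{n}{2}$.

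Finally I would close with a union bound over the at most $\binom{n}{2}$ pairs of nodes. Since only pairs with $|\triangle(N(u), N(v))| \geq \beta n$ matter, and each such pair fails to be distinguished with probability at most $\delta / \binom{n}{2}$, the probability that $C$ is not $\beta$-separating is at most $\delta$, as claimed. The only ``choice'' in the proof is tuning the constant hidden in $\Theta(\cdot)$ so that $e^{-\beta s} \le \delta/\binom{n}{2}$; there is no real obstacle, as the argument is just the standard Chernoff-style coupon-collector-type bound and a union bound over pairs.
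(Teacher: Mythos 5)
Your proposal is correct and matches the paper's proof essentially verbatim: both argue that a single random node distinguishes a pair $u,v$ with $|\triangle(N(u),N(v))|\geq \beta n$ with probability at least $\beta$, bound the failure probability for the sequence by $(1-\beta)^s \leq \delta/n^2$, and finish with a union bound over all pairs. No differences worth noting.
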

\newcommand{\cseven}{
\begin{proof}
Let $u$ and $v$ be such that $|\triangle(N(u), N(v))| \geq \beta n$. The probability that $C$ does not contain a node from $|\triangle(N(u), N(v))|$ is at most $(1-\beta)^s \leq \delta/n^2$, for an appropriate setting of $s$.
Therefore, the claim follows by a union bound over all pairs $u, v$.
\end{proof}
}
\begin{lemma}[\cite{fischer2008testing}]\label{lem-close}
Let $G$ and $H$ be isomorphic graphs and let $\pi$ be an isomorphism between them. For any $C$ that is an $\eps$-separating sequence of nodes of $G$ and for any $f$ that is $(C, \pi(C))$-label-consistent it holds that $\Delta(f(G), H) \leq \eps n^2$.
\end{lemma}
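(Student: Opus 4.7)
My plan is to reduce the claim to bounding the Hamming distance between $G$ and a carefully chosen relabeling of itself, and then to control that distance using the $\eps$-separating property of $C$.

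First I would replace $f$ by the self-map $g \triangleq \pi^{-1} \circ f : V(G) \to V(G)$. Since $\pi$ is an isomorphism from $G$ to $H$, simultaneously relabeling $f(G)$ and $H$ by $\pi^{-1}$ yields the graphs $g(G)$ and $G$ on the common vertex set $V(G)$, and the Hamming distance between adjacency matrices is invariant under such a simultaneous relabeling. Hence $\Delta(f(G), H) = \Delta(g(G), G)$, and the two defining properties of $(C, \pi(C))$-label-consistency translate into: (i)~$g$ fixes $C$ pointwise, since $g(c_i) = \pi^{-1}(f(c_i)) = \pi^{-1}(\pi(c_i)) = c_i$; and (ii)~$g$ preserves $C$-labels inside $G$, i.e.\ $\ell^G_C(v) = \ell^G_C(g(v))$ for every $v \in V(G)$. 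Property~(ii) follows by chasing definitions: since $\pi$ is an isomorphism, $\ell^H_{\pi(C)}(f(v)) = \ell^G_C(\pi^{-1}(f(v))) = \ell^G_C(g(v))$, and label-consistency equates the left-hand side with $\ell^G_C(v)$.

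Next, by the $\eps$-separation of $C$, property~(ii) forces $|N_G(v) \triangle N_G(g(v))| < \eps n$ for every $v$. To turn this per-vertex bound into a bound on
\[
\Delta(g(G), G) \;=\; \sum_{a,b \in V(G)} \bigl|\mathbf{1}[\{a,b\} \in E(G)] - \mathbf{1}[\{g(a),g(b)\} \in E(G)]\bigr|,
\]
I would insert an intermediate matrix entry and apply the triangle inequality:
\[
\bigl|M_G[a,b] - M_G[g(a),g(b)]\bigr| \;\leq\; \bigl|M_G[a,b] - M_G[g(a),b]\bigr| \;+\; \bigl|M_G[g(a),b] - M_G[g(a),g(b)]\bigr|.
\]
For the first summand, fixing $a$ and summing over $b$ yields exactly $|N_G(a) \triangle N_G(g(a))| < \eps n$, for a total contribution of at most $\eps n^2$. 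For the second summand, the substitution $a' = g(a)$ (legal since $g$ is a bijection) followed by fixing $b$ and summing over $a'$ yields $|N_G(b) \triangle N_G(g(b))| < \eps n$, for another $\eps n^2$.

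The main obstacle is the bookkeeping around the three bijections $\pi$, $f$, and $g$: one must carefully verify that $(C, \pi(C))$-label-consistency of $f$ really does translate to the single-graph statement that $g$ preserves $C$-labels in $G$. Once that reduction is in place, the triangle-inequality step yields $\Delta(f(G),H) \leq 2\eps n^2$; the stated $\eps n^2$ bound can then be recovered either by applying the lemma with an $(\eps/2)$-separating sequence (which only changes $s$ by a constant factor in Claim~\ref{clm-sep}) or by a tighter accounting that avoids the triangle inequality.
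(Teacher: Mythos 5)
Your proof is correct, and its skeleton matches the paper's: both arguments factor the isomorphism out of $f$ to obtain a label-preserving self-bijection of a single graph (you work with $g=\pi^{-1}\circ f$ on $V(G)$; the paper writes $f=g\circ\pi$ with $g$ a $\pi(C)$-label-preserving permutation of $V(H)$), and both then invoke $\eps$-separation to get the per-vertex bound $|\triangle(N(v),N(g(v)))|<\eps n$. Where you diverge is the final accounting. The paper decomposes $g$ into at most $n$ transpositions within label classes and charges each swap at most $\eps n$ changed entries; you instead bound $\Delta(g(G),G)$ directly by inserting the hybrid entry $M[g(a),b]$ and applying the triangle inequality. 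Your route is arguably cleaner: the swap decomposition implicitly requires measuring the per-swap cost in the intermediate, partially permuted graphs, a point the paper glosses over, whereas your entry-wise argument never leaves the original graph. The price is an explicit factor of $2$, giving $2\eps n^2$ rather than $\eps n^2$; as you note, this is absorbed by halving the separation parameter, which only rescales $s$ in Claim~\ref{clm-sep}, and the paper's own application in Lemma~\ref{lemma:accepts} already instantiates the lemma with an $(\eps/2)$-separating sequence, so nothing downstream is affected. (The paper's swap accounting in fact hides comparable constants: a single transposition of $u$ and $v$ changes up to $2|\triangle(N(u),N(v))|$ unordered matrix entries.)
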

\newcommand{\leight}{
\begin{proof}
Let $C$ be an $\eps$-separating sequence of the nodes of $G$.
Since $\pi$ is an isomorphism between $G$ are $H$, it follows that $\pi(C)$ is $\eps$-separating in $H$ (since for any $u, v$ that have the same $\pi(C)$-label in $H$ it must hold that $\pi^{-1}(u), \pi^{-1}(v)$ have the same $C$-label in $G$).
Thus, by definition, for any pair of nodes $v, u$ such that $\ell_{\pi(C)}^{H}(v) = \ell_{\pi(C)}^{H}(u)$ it holds that $|\triangle(N_H(u), N_H(v))| < \eps n$.
Let $f$ be $(C, \pi(C))$-label-consistent.
By definition, $\ell^{G}_{C}(v) = \ell^{H}_{\pi(C)}(f(v))$, for every $v \in V(G)$.
Therefore, there exists a bijection $g: V(H) \rightarrow V(H)$ such that $f = g \circ\pi$ and $g$ only maps between nodes with the same $\pi(C)$-label.
Thus, $f$ can be obtain from $\pi$ by making at most $n$ swaps, one by one, between elements of $V(H)$ that have the same $\pi(C)$-label.
Since each swap changes the adjacency matrix by at most $\eps n$, we obtain the desired result.
\end{proof}
}
The following claim is implied directly from the multiplicative Chernoff's bound (see Theorem~\ref{thm-cher} in Section~\ref{sec-prob}).
\begin{claim}\label{lemma8}
Let $G$ and $H$ be two graphs such that $V(G) = V(H)$.  Then by querying the adjacency-matrices of $G$ and $H$ in $\Theta(\log(1/\delta)/\eps)$ random entries it is possible to distinguish between the case that $\Delta(G, H) > \eps n^2$ from the case that $\Delta(G, H) \leq \eps n^2/2$ with probability at least $1-\delta$.
\end{claim}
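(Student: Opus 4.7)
The plan is to use the standard sampling-and-Chernoff recipe for approximating the Hamming distance between the two adjacency matrices. First I would sample $k = \Theta(\log(1/\delta)/\eps)$ index pairs $(i_1,j_1),\ldots,(i_k,j_k)$ uniformly and independently from $[n]\times[n]$, query both adjacency matrices at each of these locations, and let $X_\ell$ be the indicator that $G$ and $H$ disagree at $(i_\ell,j_\ell)$. Then $X \eqdef \sum_\ell X_\ell$ has expectation $\mu = k\cdot \Delta(G,H)/n^2$, so in the ``far'' case $\mu > k\eps$ and in the ``close'' case $\mu \leq k\eps/2$. I would set the decision threshold at $(3/4)k\eps$ and output ``far'' iff $X$ exceeds it.

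Next I would apply the multiplicative Chernoff bound (Theorem~\ref{thm-cher}) to bound the probability that $X$ falls on the wrong side of the threshold. In the close case, $\mu \leq k\eps/2$, so the threshold is at least $(3/2)\mu$; a Chernoff upper tail with deviation factor $1/2$ gives $\Pr[X > (3/4)k\eps] \leq \exp(-\Omega(k\eps))$. In the far case, $\mu > k\eps$, so the threshold is at most $(3/4)\mu$; the Chernoff lower tail with deviation factor $1/4$ similarly gives $\Pr[X \leq (3/4)k\eps] \leq \exp(-\Omega(k\eps))$. Choosing the hidden constant in $k = \Theta(\log(1/\delta)/\eps)$ appropriately drives both failure probabilities below $\delta$.

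There is essentially no obstacle here: the only things to verify are that the ratio between the ``far'' and ``close'' expectations is bounded away from $1$ (which is the point of the factor-$2$ gap in the statement), and that the sample size depends linearly on $1/\eps$ rather than $1/\eps^2$ — this comes from using the \emph{multiplicative} Chernoff bound on indicators with mean $\Theta(\eps)$, where the additive deviation we need is also $\Theta(\eps)$, so the exponent scales as $k\eps$ rather than $k\eps^2$. This matches the claimed bound of $\Theta(\log(1/\delta)/\eps)$ queries.
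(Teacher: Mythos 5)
Your proof is correct and follows essentially the same route as the paper's: sample $\Theta(\log(1/\delta)/\eps)$ uniformly random entries, form the empirical disagreement rate, and apply the multiplicative Chernoff bound on both tails, with the linear (rather than quadratic) dependence on $1/\eps$ coming precisely from using the multiplicative form on indicators of mean $\Theta(\eps)$. The only difference worth noting is your threshold: you place it at $(3/4)\eps$, which correctly lies strictly between the two expectations $\eps/2$ and $\eps$, whereas the paper's proof (and Step~6.4 of Algorithm~\ref{alg:testing}) writes the threshold as $(3\eps)/2$, which does not separate the cases when $\Delta(G,H)/n^2 \in (\eps, 3\eps/2]$ --- your choice is the one under which the ``far''-side tail bound actually goes through for all such $p$.
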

\newcommand{\cnine}{
\begin{proof}
Consider the outcome of querying the adjacency-matrices of $G$ and $H$ in $y \triangleq \Theta(\log(1/\delta)/\eps)$ random locations. Define the random variables $\{x_i\}_{i\in [y]}$ as follows: $x_i = 1$ if the values in the $i$-th location of both matrices are the same, and $0$ otherwise.
Let $p \triangleq \Delta(G, H)/n^2$ and define $\hat{p} = \sum_{i=1}^y x_i/|y|$. If $p= \eps/2$, then by Equation~\ref{eq-cher}, the probability that $\hat{p} > (3\eps)/2$ is at most $\delta$ (for the right setting of the parameter in the $\Theta$-notation).
Clearly, the same is true if $p< \eps/2$. On the other hand, if $p > \eps$ then by Equation~\ref{eq-cher} the probability that $\hat{p} \leq (3\eps)/2$ is at most $\delta$.
Therefore, by accepting if and only if $\hat{p} \leq (3\eps)/2$ we can distinguish $\Delta(G, H) = p \cdot n^2 >  \eps n^2$ from $\Delta(G, H) \leq \eps n^2/2$, as desired.
\end{proof}
}
\begin{lemma}\label{lemma:accepts}
If $G_U$ is isomorphic to $G_K$ then Algorithm~\ref{alg:testing} accepts with high probability.
\end{lemma}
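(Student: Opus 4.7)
The plan is to show that if $\pi$ is an isomorphism from $G_U$ to $G_K$, then the specific candidate sequence $P^* \triangleq \pi(C)$ passes all three checks of the inner loop with high probability, which is enough for the algorithm to output \yes.

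First, I would argue that $C$ is well-behaved. Set the constant in $s = \Theta(\epsilon^{-1}\log |V_U|)$ large enough so that, by Claim~\ref{clm-sep} applied with $\beta = \epsilon/2$ and $\delta = 1/\poly(n)$, the sequence $C$ is $(\epsilon/2)$-separating with high probability. Condition on this event for the rest of the proof.

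Next, I would verify the first two checks for $P^* = \pi(C)$. By Observation~\ref{lemma:same}, $\pi$ is $(C, P^*)$-label-consistent, so $\ell^{G_U}_C(c_i) = \ell^{G_K}_{P^*}(\pi(c_i)) = \ell^{G_K}_{P^*}(p^*_i)$ for every $i \in [s]$, passing Step~\ref{testp-another}. The same observation gives $|S^{G_U}_C(x)| = |S^{G_K}_{P^*}(x)|$ for \emph{every} label $x \in \{0,1\}^s$, so in particular for every label realized by a node of $I$, making Step~\ref{testp-1} pass as well. Because the frequencies match on \emph{all} labels, any maximally $(C, P^*)$-label-consistent $f$ chosen in Step~\ref{testp-2} is in fact a fully $(C, P^*)$-label-consistent bijection in the sense of Definition~\ref{def2}.

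Now I would invoke Lemma~\ref{lem-close}: since $C$ is $(\epsilon/2)$-separating, $\pi$ is an isomorphism, and $f$ is $(C, \pi(C))$-label-consistent, we get $\Delta(f(G_U), G_K) \leq (\epsilon/2) n^2$. The last step is to bound the empirical count in Step~\ref{step-sym}. Apply Claim~\ref{lemma8} with threshold $\epsilon$ and failure probability $\delta = 1/\poly(n)$: the choice $t = \Theta(\epsilon^{-1}\log(|V_K|^{s})) = \Theta(\epsilon^{-2}\log^2 n)$ is more than enough (and is, in fact, chosen to allow a union bound over all candidate sequences $P$ in the soundness proof). Hence, with high probability, the observed fraction of disagreeing entries in $A$ is at most $3\epsilon/2$, so Step~\ref{step-sym} accepts when processing $P^*$.

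Taking a union bound over the three high-probability events (separation of $C$, concentration of the empirical Hamming count, and the random draw of $f$), the algorithm encounters $P^*$ during the enumeration in Step~\ref{testp} and outputs \yes. The main subtlety is just making sure the constants in $s$ and $t$ are chosen so that the separation parameter $\epsilon/2$ fed into Lemma~\ref{lem-close} matches the acceptance threshold $3\epsilon/2$ used in Claim~\ref{lemma8}; everything else is a straightforward assembly of the cited observations and lemmas.
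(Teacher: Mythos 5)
Your proof is correct and follows essentially the same route as the paper's: condition on $C$ being $(\epsilon/2)$-separating via Claim~\ref{clm-sep}, use Observation~\ref{lemma:same} to get $\pi(C)$ through Steps~6.\ref{testp-another}--6.\ref{testp-1} with probability $1$, apply Lemma~\ref{lem-close} to bound $\Delta(f(G_U),G_K)$, and finish with Claim~\ref{lemma8} and a union bound. If anything, you are slightly more careful than the paper in spelling out why a \emph{maximally} $(C,\pi(C))$-label-consistent $f$ is actually fully label-consistent (because the label frequencies agree on all labels), which is the hypothesis Lemma~\ref{lem-close} needs.
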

\begin{proof}
Assume $G_U$ is isomorphic to $G_K$ and let $\pi: V(G_U) \rightarrow V(G_K)$ denote an isomorphism from $G_U$ to $G_K$.
Since the algorithm goes over every sequence $P$ of $s$ nodes from $V_K$, it also checks $\pi(C)$.
By Observation~\ref{lemma:same}, the probability that $\pi(C)$ passes Step~\ref{testp-1} is $1$.
By Claim~\ref{clm-sep}, with high probability, $C$ is $(\eps/2)$-separating (see Definition~\ref{def-sep}).
By Lemma~\ref{lem-close}, if $C$ is $(\eps/2)$-separating, then $f(G_U)$ is $(\epsilon/2)$-close to $\pi(G_U) = G_K$.
If $f(G_U)$ is $(\epsilon/2)$-close to $G_K$, then by Claim~\ref{lemma8}, with high probability $\pi(C)$ passes Step~\ref{step-sym}.
Therefore by the union bound, the algorithm accepts with high probability.
\end{proof}

\begin{lemma}\label{lemma:reject}
If $G_U$ is $\eps$-far from being isomorphic to $G_K$ then Algorithm~\ref{alg:testing} rejects with high probability.
\end{lemma}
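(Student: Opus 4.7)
The plan is to show that when $G_U$ is $\eps$-far from being isomorphic to $G_K$, every candidate sequence $P$ enumerated at Step~\ref{testp} fails the empirical distance check at Step~\ref{step-sym} with probability large enough to survive a union bound over all $|V_K|^s$ such sequences. The overall structure mirrors the completeness argument in Lemma~\ref{lemma:accepts}, but with the two sides of Claim~\ref{lemma8} swapped.

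First I would unpack the far-from-isomorphic hypothesis into a bound on the adjacency-matrix distance. By the definition of $\eps$-far from isomorphic (Section~\ref{sec:intro}), \emph{for every} bijection $h:V_U\to V_K$ the symmetric difference of the edge-sets of $h(G_U)$ and $G_K$ is at least $\eps m$ edges, which in the dense regime $m=\Theta(n^2)$ targeted by Theorem~\ref{thm:test-iso} translates to $\Delta(h(G_U), G_K)=\Omega(\eps n^2)$. Crucially, this bound applies in particular to the (random) bijection $f$ generated at Step~\ref{testp-2} for \emph{any} sequence $P$, and it is oblivious to whether $P$ has already passed Steps~\ref{testp-another}--\ref{testp-1}: the only property of $f$ used here is that it is a bijection from $V_U$ to $V_K$, which Definition~\ref{def3} guarantees.

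Next, I would fix any such $P$ and condition on an arbitrary realization of the random $f$. Because the $t=|A|$ random pairs were drawn at an earlier sampling step, independently of $f$, the count computed at Step~\ref{step-sym} is a sum of $t$ i.i.d.\ Bernoulli indicators of mean $\Delta(f(G_U), G_K)/n^2 = \Omega(\eps)$. With $t=\Theta(\eps^{-1}\log(|V_K|^s))=\Theta(\eps^{-2}\log^2 n)$, a lower-tail multiplicative Chernoff bound (identical in flavor to the calculation driving Claim~\ref{lemma8}) shows that the probability this sum drops to $(3\eps)|A|/2$ or below is at most $|V_K|^{-s}\cdot n^{-c}$ for any prescribed constant $c$. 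Since this bound is uniform in $f$, it survives averaging over the internal randomness of Step~\ref{testp-2}. Finally, a union bound over the (at most) $|V_K|^s$ candidate sequences enumerated at Step~\ref{testp} gives that with probability at least $1-n^{-c}$ no candidate triggers a \yes{} output, so the algorithm exits the loop and returns \no.

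The main obstacle is the constant-factor separation between the expected count in the far case and the acceptance threshold $(3\eps)|A|/2$ that makes the lower-tail Chernoff bound bite. The density assumption $m=\Theta(n^2)$ is what turns the edge-level gap $\eps m$ from the definition of far into a gap of the same asymptotic order as $\eps$ on the normalized scale that Step~\ref{step-sym} operates on, leaving enough multiplicative slack over $(3\eps)/2$ for the lower tail to concentrate below the threshold; once the constants inside the $\Theta$'s defining $t$ are set accordingly, the per-$P$ failure probability is $|V_K|^{-s}n^{-c}$ as required, and everything else is a routine concentration-plus-union-bound calculation.
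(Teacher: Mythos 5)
Your proposal is correct and follows essentially the same route as the paper's proof: observe that $\eps$-farness forces $\Delta(f(G_U),G_K)\geq \eps n^2$ for the bijection $f$ of Step~6.3, use the independence of $f$ and the pre-sampled set $A$ to apply a Chernoff-type concentration bound (the paper packages this as Claim~\ref{lemma8} with failure probability $|V_K|^{-(s+c)}$, which you unfold explicitly), and finish with a union bound over the $|V_K|^s$ candidate sequences. The only cosmetic difference is that you track the $m=\Theta(n^2)$ normalization and the constant-factor slack against the $(3\eps)|A|/2$ threshold explicitly, which the paper leaves implicit.
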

\begin{proof}
Assume $G_U$ is $\eps$-far from being isomorphic to $G_K$. We claim that with high probability, any sequence $P$, fails to pass Step~\ref{testp}.
We show this by bounding the probability that a fixed $P$ passes Step~\ref{testp} and then apply the union bound over all possible sequences.
Fix a sequence $P$ and assume that $P$ passes Step~\ref{testp-1} (otherwise we are done).
Let $f$ be the corresponding $f$ from Step~\ref{testp-2} (that is chosen at random).
Since $G_U$ is $\eps$-far from being isomorphic to $G_K$, by definition, $\Delta(f(G_U), G_K) \geq \eps n^2$.
Recall that $f$ is chosen uniformly at random from the set of functions that are maximally $(C, P)$-label-consistent.
It is not hard to verify that $f$ and $A$ are independent random variables.
Therefore we can apply Claim~\ref{lemma8} on Step~\ref{step-sym} as $A$ is a set of potential edges chosen uniformly at random and, in particular, independently from $f$.
By Claim~\ref{lemma8}, $P$ succeeds to pass Step~\ref{step-sym} with probability at most $1/|V_K|^{s+c}$ for any absolute constant $c$. Thus, the lemma follows by a union bound over all possible sequences, as their number is bounded by $|V_K|^s$.
\end{proof}

\subsection{A Detailed Description of the Distributed Implementation of Algorithm~\ref{alg:testing}}\label{sec:detailed}
In this section we provide a detailed description of the distributed implementation of our algorithm. We focus on steps for which the implementation is not straightforward and analyze their round complexity. In particular, we focus on Steps~\ref{step1},6.\ref{testp-1},6.\ref{testp-2}-6.\ref{step-sym}.
\subparagraph*{Step~\ref{step1}: Selecting $s$ Nodes u.a.r.}
We propose the following simple procedure to select $s$ nodes uniformly at random (which is a kind of folklore).
Each node selects a random number in $[n^{c+2}]$ where $c$ is an absolute constant.
For a fixed pair of nodes, the probability that both nodes pick the same number is at most $1/n^{c+2}$.
Therefore, by union bound over all pairs, with probability at least $1-1/n^c$, all selected numbers are distinct.
Conditioned on this event, the nodes with the $s$ highest numbers are distributed uniformly at random.
Each node sends its ID and its selected number up the BFS tree and the messages are forwarded up the tree in a manner that prioritizes messages whose number is higher.
Therefore, the root receives the $s$ highest numbers (along with the IDs of the corresponding nodes) in $D+s$ rounds.
To see this observe that the message with the highest number is never delayed and in general the message with the $i$-highest number may be delayed for at most $i-1$ rounds.

\subparagraph*{Step~6.\ref{testp-1}: Computing $S_C^{G_U}(\ell)$.}
Clearly, $r$ can compute $S_P^{G_K}(\ell)$ for any label $\ell$ as $r$ knows $P$ and $G_K$. Therefore, in order to describe the implementation of Step~\ref{testp-1} it suffices to explain how $r$ can obtain $|S_C^{G_U}(\ell)|$.

We begin with the special case of $\ell = (0, \ldots, 0)$. The nodes in $G_U$ that have this label are nodes that are not adjacent to any one of the nodes in $C$.
Their number can be computed in $O(D)$ rounds by summing it up the BFS tree as follows. Assume w.l.o.g. that every node knows its layer in the BFS-tree. In the first round, every node that is in the last layer (which is also a leaf) sends $1$ up to its parent.
In the next round, all nodes in the next layer sum up the received numbers and add $1$ if their $C$-label is $(0, \ldots, 0)$. They send this number up to their parents and so on until we get to the root.

Consider a label $\ell$ for which at least one bit it $1$. Let ${\rm msb}(\ell)$ denote the maximum $i$ such that $\ell_i = 1$. Since the node $c_{{\rm msb}(\ell)}$ is connected to all nodes whose $C$-label is $\ell$, it can compute their total number (recall that in Step~\ref{stepeach} every node sends its $C$-label to all its neighbors) and send it to the root.
Therefore, by a pipelining argument,  the root can obtain $|S_C^{G_U}(\ell)|$ for every $\ell$ which is a $C$-label of a node in $I$ in $O(D+|I|) = O(D+ (\eps^{-1}\log n)^2)$ rounds.

\subparagraph*{Steps~6.\ref{testp-2}-6.\ref{step-sym}: Accessing $f$.}
Recall that we require from $f$ to be chosen u.a.r. from the set of all functions that are maximally $(C, P)$-label-consistent (see Definition~\ref{def3}).
Recall that $f$ is only evaluated on nodes in $I$ but at the same time its selection has to be independent of $A$ (and $I$).
To this end, the root verifies the following:
\begin{enumerate}
\item In Sub-step~\ref{testp-another} of Step~\ref{testp} it verifies that for the selected sequences, $C$ and $P$, corresponding nodes have matching labels. Namely, $\ell^{G_U}_C(c_i) = \ell^{G_K}_P(p_i)$ for every $i\in [s]$.
\item In Sub-step~\ref{testp-1} of Step~\ref{testp} it verifies that $|S^{G_U}_C(\ell)| =  |S^{G_K}_{P}(\ell)|$ for every $\ell$ which is a $C$-label of a vertex in $I$.
\end{enumerate}
If both conditions hold, then it follows that by mapping the nodes in $S^{G_U}_C(\ell)$ to the nodes in $S^{G_K}_{P}(\ell)$ u.a.r. and independently from the mapping of all other nodes (except for the mapping of $C$ to $P$ which is already determined) for every $\ell$ such that $|S^{G_U}_C(\ell)| =  |S^{G_K}_{P}(\ell)|$, we are in fact accessing $f$ which is drawn according to the desired distribution.
Therefore the root $r$ simply maps every $v\in I$ to a uniform node $u\in V_K$ such that: (1) $\ell^{G_U}_C(v) = \ell^{G_K}_{P}(u)$ (2) $u$ is still unmapped (such node always exists). Since $r$ knows $G_K$ and $\ell^{G_U}_C(v)$ for every $v\in I$, it is able to compute $f(v)$ for every $v \in I$, as desired.

\def\wG{{\widetilde{G}}}

%

\section{Computing an Approximated Isomorphism}\label{sec:isoapx}
In this section we prove the following theorem.



\begin{theorem}\label{thm:findiso}
Let $G_U$ denote the input graph and let $G_K$ be a graph which is isomorphic to $G_U$ and is given as an input to all nodes in the network.
There exists a randomized algorithm such that each node in $G_U$, $v$, outputs $g(v)$ where $g: V_U \rightarrow V_K$ is a bijection such that $g(G_U)$ is $\epsilon$-close to $G_K$.
The round complexity of the algorithm is $O(D+(\eps^{-1}\log n)^2)$. The algorithm succeeds with high probability.
\end{theorem}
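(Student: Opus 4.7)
The plan is to build on Theorem~\ref{thm:test-iso}, running Algorithm~\ref{alg:testing} with proximity parameter $\eps/2$. By Lemma~\ref{lemma:accepts} applied to the isomorphic pair $(G_U, G_K)$, with high probability the algorithm identifies a sequence $P \in V_K^s$ at the root $r$ whose associated random bijection $f$, drawn uniformly from the maximally $(C,P)$-label-consistent bijections, satisfies $\Delta(f(G_U), G_K) \leq (\eps/2) n^2$. The tester only evaluates $f$ at the small query set $I$; the remaining task is to materialize an explicit bijection $g : V_U \to V_K$ at every node $v$ in the network with $\Delta(g(G_U), G_K) \le \eps n^2$.

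Every node already holds $G_K$ (by hypothesis), its label $\ell^{G_U}_C(v)$ computed by the tester, and (after the tester's broadcasts) both $C$ and $P$. For a label $\ell \neq 0^s$, the node $c_{\mathrm{msb}(\ell)} \in C$ is a direct neighbor of every vertex with label $\ell$. Clustering labels by their most significant bit yields $s+1 = O(\eps^{-1}\log n)$ clusters, each handled by one cluster-coordinator (the root handles the zero-label cluster). In a single round of local communication, each cluster-coordinator learns $|S^{G_U}_C(\ell)|$ for every label $\ell$ in its cluster; the values $|S^{G_K}_P(\ell)|$ are already known locally.

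Each cluster-coordinator then draws a uniform random bijection from $S^{G_U}_C(\ell)$ onto $S^{G_K}_P(\ell)$ for every \emph{balanced} label $\ell$ (where the two cardinalities agree), and in one round directly notifies each of its neighbors of its assigned image. For labels with mismatched cardinalities, the coordinator summarizes its net overflow/underflow. Since there are only $O(s)$ coordinators, $r$ collects the full overflow/underflow matrix in $O(D + s^2)$ rounds over the BFS tree, resolves it as a small transportation problem on $O(s) \times O(s)$ cells, and broadcasts the resolution back. Each coordinator then completes the within-cluster assignment and announces the remaining images to its neighbors. The dominant cost is the tester plus the cross-cluster summary, giving $O(D + (\eps^{-1}\log n)^2)$ rounds in total.

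The main obstacle is the correctness analysis, namely $\Delta(g(G_U), G_K) \leq \eps n^2$. My approach is to couple $g$ with the tester's $f$ on a common sample space so that $g(v) = f(v)$ whenever $v$ lies in a balanced label class; the two bijections can differ only on vertices in imbalanced classes. Since $\Delta(f(G_U), G_K) \leq (\eps/2) n^2$, it suffices to show $\Delta(g(G_U), f(G_U)) \leq (\eps/2) n^2$ with high probability. Here I would use (i) Claim~\ref{clm-sep}, which ensures that $C$ is $(\eps/2)$-separating with high probability, so that swapping images \emph{within} a label class changes the adjacency matrix by at most $(\eps/2) n$ entries per swap (by the same swap analysis as in the proof of Lemma~\ref{lem-close}); and (ii) a concentration argument showing that the total mass of imbalanced label classes is small with high probability, exploiting the fact that $P$ passes Sub-step~\ref{testp-1} of Step~\ref{testp} for every label witnessed by the random sample $I$ of size $\Theta((\eps^{-1}\log n)^2)$, so that every label of non-negligible mass is hit by $I$ via a Chernoff bound while the collective mass of labels of sub-threshold frequency is itself bounded via separation. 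The triangle inequality then yields $\Delta(g(G_U), G_K) \leq \eps n^2$ as required.
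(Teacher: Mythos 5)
Your proposal follows essentially the same route as the paper's proof: run the tester to obtain $P$ and the query-order-oblivious $f$, cluster the labels by their most significant bit so that each nonzero cluster has a coordinator $c_i\in C$ adjacent to all of its members, match balanced label classes locally, resolve the $O(s)\times O(s)$ cross-cluster imbalance through the root, and prove correctness by coupling $g$ to $f$ and bounding the disagreement set using the fact that Sub-step~6.\ref{testp-1} (via the random sample $I$) certifies w.h.p.\ that the imbalanced labels carry at most $\eps n$ mass. Your ``transportation problem'' at the root is the paper's reserved set $R$ in different clothing: the paper broadcasts the cluster deficits $\{(i,j_i)\}$ and lets every coordinator deterministically compute a canonical resolution, which saves the explicit second round-trip but is otherwise the same $O(D+s)$-round step.

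One concrete point needs repair. You assign the zero-label cluster to the root and claim each coordinator ``in one round directly notifies each of its neighbors of its assigned image,'' but the nodes of $Y=S^{G_U}_C(0^s)$ are precisely those with \emph{no} neighbor in $C$, and they need not be adjacent to the root either; no single node is adjacent to all of $Y$, so the one-round local notification fails for this cluster. The paper handles this by assigning each node of $Y$ a unique rank in $[|Y|]$ via subtree counts and a recursive partition of the interval $[1,\ldots,|Y|]$ down the BFS tree ($O(D)$ rounds), after which each $Y$-node matches itself to $Y'$ using the globally known order on $V_K$; it also adds an explicit verification that $|S^{G_U}_C(0^s)|=|S^{G_K}_P(0^s)|$ so that the cluster bookkeeping closes. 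Your correctness analysis should also make explicit that the zero class, though possibly large, contributes only $O(\eps n^2)$ because its members have at most $\eps n$ neighbors w.h.p.\ (or, equivalently, via your separation-based swap bound applied to the common label $0^s$); with that, the coupling argument goes through as in the paper's Claim~\ref{clm:1}.
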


\begin{proof}
The first step of the algorithm is to run Algorithm~\ref{alg:testing} with the only difference that in Step~\ref{testp} the root also verifies that $|S^{G_U}_C(\ell)| =  |S^{G_K}_{P}(\ell)|$ for $\ell = (0, \ldots, 0)$.
Since $G_K$ and $G_U$ are isomorphic, by Lemma~\ref{lemma:accepts}, w.h.p. the algorithm accepts and hence finds $P$ and the corresponding $f$ that pass Step~\ref{testp}.
Recall that w.h.p. $f(G_U)$ is $\eps$-close to $G_K$.
If every node $v$ could output $f(v)$ then we were done.
However, we can not compute $f$ for every node $v$ because for a constant fraction of the nodes its computation might require global information on $G_U$.
Instead, our goal is to output $g$ which is $O(\epsilon)$-close to $f$ and can be computed for every node without causing too much congestion. We next describe $g$ and its computation.
%

We begin with some notation.
Let $L_i \subseteq \{0,1\}^s$ denote the set of labels $\ell$ for which ${\rm msb}(\ell) = i$.
Let $Y$ denote the set of nodes in $V_U$ whose $C$-label is $(0, \ldots, 0)$.
Similarly, let $Y'$ denote the set of nodes in $V_K$ whose $P$-label is $(0, \ldots, 0)$.
For a graph $H$ and a sequence $D$ let $J^H_D(i) \eqdef  \cup_{\ell \in L_i} S^H_D(\ell)$, namely, this is the set of all nodes in $H$ whose $D$-label belongs to $L_i$.
We may refer to $J^H_D(i)$ as the $i$-th cluster of the graph $H$ w.r.t. $D$.
For $i\in s$, define $j_i \eqdef |J^{G_U}_C(i)| - |J^{G_K}_{P}(i)|$.
Namely, $j_i$ is the difference between the sizes of the $i$-th clusters in both graphs (w.r.t. $C$ and $P$, respectively).

We next define the set of {\em reserved} nodes of $V_K$, denote by $R$.
For each $i$ such that $j_i < 0$, $|j_i|$ nodes from $J^{G_K}_{P}(i)$ belong to $R$.
Specifically, these are the nodes whose order~\footnote{We assume that there is a total order on $V_K$ which is known to all the nodes in $G_U$.} is the least from the vertices in $J^{G_K}_{P}(i)$.
We consider the order to be the same order as in $V_K$ only that elements in $P$ have the highest order (this is to ensure that none of the elements in $P$ belong to $R$).

We are now ready to describe $g$. Let $v\in V_U$ and let $\ell =\ell^{G_U}_C(v)$.
We assume that $v \notin Y$ as we explain the mapping of the nodes in $Y$ separately.
We consider the following cases.

{\bf The first case is when $j_i = 0$.} We have the following sub-cases.
\begin{enumerate}
\item For every $\ell' \in L_i$ such that $|S^{G_U}_C(\ell')| = |S^{G_K}_{P}(\ell')|$, $g$ matches u.a.r. the elements in $S^{G_U}_C(\ell')$ to the elements in $S^{G_K}_{P}(\ell')$.
\item The rest of the elements in $J^{G_U}_C(i)$ are matched u.a.r. to the unmatched elements in $J^{G_K}_{P}(i)$.
\end{enumerate}
Therefore, in this case the elements in $J^{G_U}_C(i)$ are matched only to the elements in $J^{G_K}_{P}(i)$ and vice versa.

{\bf The second case is when $j_i < 0$.} We have the following sub-cases.
\begin{enumerate}
\item For every $\ell' \in L_i$ such that $|S^{G_U}_C(\ell')| = |S^{G_K}_{P}(\ell')|$ and $S^{G_K}_{P}(\ell') \cap R = \emptyset$, $g$ matches the elements in $S^{G_U}_C(\ell')$ u.a.r. to the elements in $S^{G_K}_{P}(\ell')$.
\item For every $\ell' \in L_i$ such that $|S^{G_U}_C(\ell')| = |S^{G_K}_{P}(\ell')|$ and $S^{G_K}_{P}(\ell') \cap R \neq \emptyset$, $g$ matches u.a.r. a random set of $|S^{G_K}_{P}(\ell') \setminus R|$ elements from $S^{G_U}_C(\ell')$  to $S^{G_K}_{P}(\ell') \setminus R$.
\item The rest of the un-matched elements in $J^{G_U}_C(i)$ are mapped u.a.r. to the un-matched elements in $J^{G_K}_{P}(i) \setminus R$.
\end{enumerate}
Observe that all the elements in $J^{G_U}_C(i)$ are matched to elements in $J^{G_K}_{P}(i)$ and that the elements that belong to $J^{G_K}_{P}(i) \cap R$ are still un-matched.

{\bf The third case is when $j_i > 0$.} We have the following sub-cases.
\begin{enumerate}
\item For every $\ell' \in L_i$ such that $|S^{G_U}_C(\ell')| = |S^{G_K}_{P}(\ell')|$, $g$ matches the elements in $S^{G_U}_C(\ell')$ u.a.r. to the elements in $S^{G_K}_{P}(\ell')$.
\item The rest of the elements in $J^{G_U}_C(i)$ are matched u.a.r. to the unmatched elements in $J^{G_K}_{P}(i)$.
\item The remaining $|j_i|$ elements in $J^{G_U}_C(i)$ are matched to the nodes of order $(\sum_{a < i} j_a) + 1$ to $(\sum_{a < i} j_a) + j_i$ in $R$.
\end{enumerate}

This concludes the description of $g$ for nodes that do not belong to $Y$.
Before we explain how $Y$ is matched to $Y'$ we first describe how $g$ can be computed distributively for nodes that have at least one neighbor in $C$ (namely, nodes that do not belong to $Y$).
Each node $c_i \in C$ is responsible to compute and to send to each node $v$ whose $C$-label is in $L_i$ the value $g(v)$ (note that $c_i$ and $v$ are necessarily neighbors).
As a preliminary step, each node $c_i$ computes $j_i = |J^{G_U}_C(i)| - |J^{G_K}_{P}(i)|$ and sends $(i, j_i)$ up the BFS tree.
Notice that $\sum_{i\in s} j_i = 0$ as $|S^{G_U}_C(\ell)| =  |S^{G_K}_{P}(\ell)|$ for $\ell = (0, \ldots, 0)$ and $|V_U| = |V_K|$.
The root sends the set $\{(i, j_i)\}_{i\in s}$ down the BFS tree.
By knowing $G_K$ and the set $\{(i, j_i)\}_{i\in s}$, every node $c_i$ can easily compute $R$.
It is not hard to see that this suffices in order to match the elements in $J^{G_U}_C(i)$ to $V_K$ as described above.

We next describe the matching of $Y$ to $Y'$ and explain how it is computed distributively.
We aim to assign to each node in $Y$ a label in $[|Y|]$ uniquely.
This way each node in $Y$ can match itself to a node in $Y'$ (recall that all nodes know $V_K$ and the total order on $V_K$).
To this end, we use the BFS tree as follows. Each node in the BFS tree computes how many nodes in its subgraph are in $Y$.
This can be done in $O(D)$ rounds as follows. We assume w.l.o.g. that each node knows its layer in the BFS tree.
Let $b$ denote the number of layers. We proceed in $b$ rounds.
In the first round, every node in $Y$ which is in the $b$-th layer sends to its parent the message $1$.
In the next round, all the nodes in layer $b-1$ sum up the messages they received and add $1$ if they belong to $Y$.
Then they send the result to their parents and so on until we end up at the root.
Now the root partitions the interval $[1, \ldots, |Y|]$ into consecutive sub-intervals and assigns these sub-intervals to its children.
Each child receives an interval whose size equals to the number of nodes in its subgraph that are in $Y$.
In a similar manner, these sub-intervals are partitioned recursively down the tree until each node in $Y$ is assigned with a unique number in $|Y|$, as desired.

By construction it follows that $g$ is a bijection.
The bound on the round complexity follows from the bound on the round complexity of Algorithm~\ref{alg:testing} and the fact that there are only $s = O(\eps^{-1}\log n)$ clusters.
\ifnum\icalp=1
It remains to prove the following claim, 
which is deferred to Appendix~\ref{sec:mpfs}.
\else
It remains to prove the following claim. 
\fi
\begin{claim}\label{clm:1}
With high probability $\Delta(g(G_U), G_K) < \eps n^2$.
\end{claim}
\newcommand{\cthirt}{
\begin{proof}
We observe that both $f$ and $g$ are random variables.
To prove the claim about $g$ we couple $g$ to $f$.
From the fact that w.h.p. $\Delta(f(G_U), G_K) < \eps n^2$ with combination with the coupling it will follow that w.h.p. $\Delta(g(G_U), G_K) = O(\eps n^2)$.
Therefore, by setting the proximity parameter to be $\Theta(\eps)$ the claim will follow.

Consider the following description of $g$ in terms of $f$.
Let $B = \{\ell \in \{0,1\}^s: |S^{G_U}_C(\ell)| \neq |S^{G_K}_{P}(\ell)|\}$.
For every $v \in V_U$, $g(v) = f(v)$ unless $v \in Y \cup B$ or, $v \notin Y \cup B$ and $f(v) \in R$.
In these cases we match $v$ to a node in $V_K$ as described above.
Observe that under this formulation the distribution of $g$ remains the same.
The only difference is that now, for the sake of the analysis, it is coupled to the distribution of $f$.

It follows that the number of nodes $v\in V_U$ for which $f(v) \neq g(v)$ is at most $|Y| + |R| + |B|$.
By Step~\ref{testp-1} of Algorithm~\ref{alg:testing}, with high probability it holds that $\sum_{\ell \in B} |S^{G_U}_C(\ell)| \leq \eps n$.
On the other hand, w.h.p., the number of neighbors of every node in $Y$ is at most $\eps n$.
This implies that the number of nodes in the neighborhood of $Y'$ is at most $O(\eps n^2)$ (since otherwise would reject $f$ w.h.p.).
Therefore, w.h.p., the contribution of the nodes in $Y$ and $Y'$ to $\Delta(f(G_U), g(G_U))$ is at most $O(\eps n^2)$.
Thus, w.h.p. $\Delta(f(G_U), g(G_U)) = O(\eps n^2)$, as desired.
By Claim~\ref{clm:1}, w.h.p. $\Delta(f(G_U), G_K) < \eps n^2$, hence we obtain by the union bound that w.h.p. $\Delta(g(G_U), G_K) < O(\eps n^2)$. By setting the proximity parameter appropriately we obtain the desired result.
\end{proof}
}
\ifnum\icalp=0
\cthirt
\fi

This concludes the proof of the Theorem.

\end{proof}

\ifnum\icalp=0
\section{Distributed Algorithm for Deciding Isomorphism}\label{sec:exact}

In this section we prove the following theorem.

\begin{theorem}\label{thm:exact}
  There exists a randomized distributed algorithm that decides if $G_K$ and $G_U$ are isomorphic with high probability. The round complexity of the algorithm is $O(n)$.
\end{theorem}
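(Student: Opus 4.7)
My plan is to follow the template laid out in Section~1.3 of the paper: run the brute-force isomorphism procedure at the root using its unbounded local computation, but replace the naive transmission of $G_U$ with a succinct randomized fingerprint of $G_U$, in the spirit of the graph-equality algorithm of Abboud et al.~\cite{DBLP:journals/corr/abs-1901-01630} for identical subgraph detection. Since the algorithm will go through all $n!$ candidate bijections $\pi: V_K \to V_U$ and compare each $\pi(G_K)$ with $G_U$ through its fingerprint, the underlying equality test must be amplified enough that a union bound over the $n!$ bijections still yields high confidence.

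The distributed phase runs as follows. The root $r$ first constructs a BFS tree of $G_U$ and collects all IDs of $V_U$ by pipelined convergecast in $O(D+n)$ rounds. Next $r$ samples and broadcasts shared randomness: a prime $p$ of $\Theta(n\log n)$ bits and a function $f:V_U\to\mathbb{F}_p$ drawn from a family rich enough that Schwartz--Zippel applies to degree-$2$ polynomials in the $f$-values. Each node $v\in V_U$ locally computes $h_v \eqdef f(v)\cdot\sum_{u\in N_{G_U}(v)}f(u)\bmod p$, and a second pipelined convergecast returns
\[
\Phi_f(G_U)\eqdef \tfrac{1}{2}\sum_{v\in V_U} h_v \bmod p = \sum_{\{u,v\}\in E_U} f(u)\,f(v) \bmod p
\]
to $r$ in $O(D+n)$ rounds, since this is an $O(n)$-word element of $\mathbb{F}_p$ and the partial subtree sums can be pipelined word-by-word along the tree with a constant overhead for modular carries. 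Finally $r$ enumerates all bijections $\pi:V_K\to V_U$, computes $\Phi_\pi \eqdef \sum_{\{u,v\}\in E_K} f(\pi(u))\,f(\pi(v))\bmod p$ locally, outputs \accept\ iff some $\pi$ satisfies $\Phi_\pi=\Phi_f(G_U)$, and broadcasts the verdict back down the BFS tree in $O(D)$ rounds.

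For correctness, if $G_U\cong G_K$ then any isomorphism $\pi^\star$ gives $\pi^\star(E_K)=E_U$ and thus $\Phi_{\pi^\star}=\Phi_f(G_U)$ identically, so $r$ always accepts (one-sided error). If $G_U\not\cong G_K$ then for every bijection $\pi$ the difference $\Phi_f(G_U)-\Phi_\pi$ is a nonzero polynomial of degree $2$ in the $f$-values, so Schwartz--Zippel bounds the probability of a per-$\pi$ false accept by $2/p$; a union bound over the $n!$ bijections bounds the total false-accept probability by $2n!/p$, which is at most $1/n^c$ for any desired constant $c$ once $\log p = \Omega(n\log n)$.

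The main obstacle is the tension between the union bound (which forces the fingerprint to carry $\Omega(\log(n!))=\Omega(n\log n)$ bits of entropy, i.e.\ $\Omega(n)$ machine words) and the $O(n)$-round budget (which limits both how much shared randomness the root can push to the network and the size of any single quantity that can be aggregated at $r$). The key idea for reconciling the two is to use a single fingerprint living in one large field $\mathbb{F}_p$ rather than many small-field fingerprints, so that both the broadcast of the randomness description and the convergecast of the $\Omega(n)$-word fingerprint can be scheduled by word-level pipelining in $O(D+n)=O(n)$ rounds, reducing the randomized analysis to a single application of Schwartz--Zippel to the degree-$2$ difference polynomial.
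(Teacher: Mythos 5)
Your overall architecture (BFS tree, $O(D+n)$-round pipelined aggregation of an $O(n)$-word fingerprint, brute-force enumeration of all $n!$ bijections at the root with a union bound, one-sided error) matches the paper's proof. But there is a genuine gap at the heart of your construction: the shared randomness $f$ cannot simultaneously be (i) distributed to the network within the round budget and (ii) strong enough to support the Schwartz--Zippel bound you invoke. For the bound $\Pr[\Phi_f(G_U)-\Phi_\pi=0]\le 2/p$ to hold for an \emph{arbitrary} nonzero quadratic form $\sum_{\{u,v\}} c_{uv} f(u)f(v)$ with $c_{uv}\in\{-1,0,1\}$, the standard Schwartz--Zippel argument needs the $n$ values $f(v)$ to be independent and uniform over (a large subset of) $\mathbb{F}_p$; with $\log p=\Theta(n\log n)$ this is $\Theta(n^2\log n)$ bits of randomness, and broadcasting (or converge-casting) a table of $n$ field elements of $\Theta(n)$ words each costs $\Theta(n^2)$ rounds in \congest, not $O(n)$. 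Your stated resolution --- ``a single fingerprint living in one large field'' --- only fixes the aggregation of the $O(n)$-word value $\Phi_f(G_U)$; it does nothing for the description length of $f$ itself. Conversely, if you replace $f$ by a function drawn from a family with an $O(n\log n)$-bit seed, the claim that ``Schwartz--Zippel applies'' is exactly the step that needs proof, and for generic limited-independence families it is false at the error level $2/p$ that your union bound over $n!$ bijections requires. The gap is repairable --- e.g., give each vertex a label in $[n]$, assign each potential edge a distinct monomial $x^{\ell(i,j)}$, and evaluate the resulting univariate polynomial of degree $O(n^2)$ at a random point of $\mathbb{F}_p$ (seed length $O(n\log n)$ bits), so that distinct edge sets yield distinct polynomials --- but this is a different fingerprint from the one you defined, and specifying it is the actual content of the step.

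For comparison, the paper sidesteps the issue entirely: it fingerprints the $O(n^2)$-bit adjacency string directly, computing $s(M)\bmod p_i$ for $k=O(n)$ independent random primes $p_i$ of $O(\log n)$ bits each. The total shared randomness is $O(n\log n)$ bits and the aggregate fingerprint is $O(n)$ words, both of which fit into $O(D+n)$ rounds by pipelining, and the per-bijection error $n^{-\Omega(n)}$ survives the union bound over $n!$ permutations. Your remaining steps (one-sided completeness, the union bound, the final broadcast of the verdict) are fine once a communicable fingerprint with the required soundness is in place.
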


\begin{proof}
The idea of the algorithm is to go over all possible one-to-one mappings between the nodes of $G_K$ and the nodes of $G_U$ and to test for equality of the corresponding adjacency matrices.
The test for equality is performed with very high confidence level in order to ensure that the total error probability is bounded by a small constant.
We note that a similar reduction for testing equality also appears in the algorithm of~\cite[Sec.~6.2]{DBLP:journals/corr/abs-1901-01630} for the Identical Subgraph Detection problem (ISDP).

The first step of our algorithm is to construct a BFS tree and to assign to each node in the network a unique label in $[n]$ where $n \eqdef |V(G_K)|$.
This step requires $O(D)$ rounds where $D$ denotes the diameter of $G_U$ (see details on the implementation of this step in the proof of Theorem~\ref{thm:findiso}).
Consider the adjacency matrix of $G_U$, $M$, in which the rows are sorted according to the labels assigned to the nodes.
We consider the natural total order on the following set of pairs of nodes $P = \{(i, j) : i\in [n], j\in [n], i< j\}$ in which the pairs are sorted according to the first element and ties are broken according to the second element. Let $\ell(i, j)$ denote the order of the pair $(i, j)$.
Each pair $(i , j)\in P$ corresponds to the potential edge between the pair of nodes with labels $i$ and $j$, respectively.
The matrix $M$ can be represented as an integer $s(M)$ where for each $(i, j) \in P$ the $\ell(i, j)$-th lsb (least significant bit) of the binary representation of $s(M)$ indicates whether $(i, j)$ is an edge in the graph.
Observe that we can calculate $s(M)$ in $D$ rounds by starting the calculation at the lowest layer of the BFS tree and summing up the outcomes as we go up the tree, layer by layer.
The calculation is performed such that each power of two, $2^{\ell}$, is added (once) if and only if the corresponding edge is present in the graph (i.e. $2^{\ell(i, j)}$ is added if and only if $(i, j)$ is present in the graph).

Since the representation of $s(M)$ requires $O(n^2)$ bits our goal we calculate $s(M) \mod p$ instead.
To this end we proceed in the same manner as mentioned above only that before the nodes send up the tree the outcome of the intermediate sums, they apply the $\mod p$ operation on the outcome.
Let $\mathcal{P} = \{p_1,\ldots, p_k\}$ be a multiset of $k$ prime numbers, where $k = O(n)$, each chosen independently and uniformly at random from the first $n^2$ primes.~\footnote{It is well known that for sufficiency large number $x \in \NN$ the number of prime numbers that are at most $x$ is $\Theta(x/\log x)$.}
Let $M'$ denote the adjacency matrix of $G_K$.
If $M' \neq M$ then the probability that $s(M) \equiv s(M') \mod p$ for a random prime number in $[n^2]$ is at most $1/\Omega(n)$~\cite{DBLP:books/daglib/0011756}.
Therefore the probability that $s(M) \equiv s(M') \mod p$ every $p\in \mathcal{P}$ is at most $1/\Omega(n^k) = 1/\Omega(n^n)$.
Thus we can test with one-sided error if $G_U$ and $G_K$ are equal. The soundness of the equality test is $1 - 1/\Omega(n^n)$.
We can apply the equality test for every mapping $\pi$ between $G_K$ and $G_U$ and return \yes\ if and only if there exists a mapping for which the test accepts.
Namely, we go over all possible permutations over the nodes of $G_K$ and for each permutation, $\pi$, we perform the equality test between $s(M)$ and $s(\pi(M'))$ where $\pi(M')$ denotes adjacency matrix of $G_K$ after applying the permutation $\pi$ on the nodes.
By the soundness of the equality test and the union bound, the probability that this test returns \no\  when $G_K$ and $G_U$ are not isomorphic is at least $2/3$ (for an appropriate adjustment of the parameters).

By standard pipelining, it is possible to calculate $s(M) \mod p$ for every $p \in \mathcal{P}$ in $O(D + k)$ rounds, therefore the round complexity of the above test is $O(n)$.
To verify this observe that in order to execute the above test the only information we need is of $G_K$ and  the result of $s(M) \mod p$ for every $p \in \mathcal{P}$.
This concludes the proof.
\end{proof}

\medskip\noindent
We observe that the above algorithm can be adapted to the semi-streaming model in a straight-forward way as follows.

\begin{theorem}\label{thm:sstream}
There exists an algorithm in the semi-streaming model that receives a graph $G_K$ over $n$ nodes as an input, where the space for storing $G_K$ is a read-only memory,
and a stream of the edges of another graph $G_U$ (according to any order) and decides, with one-sided error, whether $G_K$ and $G_U$ are isomorphic or not. The algorithm performs one-pass and uses $O(n\log n)$ bits of space.
\end{theorem}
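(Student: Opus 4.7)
The plan is to adapt the fingerprinting idea from the proof of Theorem~\ref{thm:exact} essentially verbatim, exploiting the fact that the fingerprint
\[
s(M) \bmod p = \sum_{(i,j) \in E(G_U),\; i<j} 2^{\ell(i,j)} \bmod p
\]
is a sum of per-edge contributions, and therefore can be computed incrementally as the edges of $G_U$ arrive in any order.

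Before reading the stream, I would draw a multiset $\mathcal{P} = \{p_1,\ldots,p_k\}$ of $k = O(n)$ primes, each sampled independently and uniformly from the first $n^2$ primes, and initialize counters $c_1,\ldots,c_k$ to zero. This uses $O(n \log n)$ bits of writable memory since each $p_\ell$ and each $c_\ell$ fits in $O(\log n)$ bits. As each streamed edge $\{u,v\}$ of $G_U$ arrives (with $u<v$ in the fixed node labeling), I would, for every $\ell \in [k]$, update $c_\ell \leftarrow (c_\ell + 2^{\ell(u,v)}) \bmod p_\ell$. Note that $\ell(u,v)$ depends only on the pair $(u,v)$, not on the rest of the stream, so this update is local. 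After the stream is exhausted, $c_\ell = s(M) \bmod p_\ell$ for every $\ell$.

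In the post-processing phase I would iterate, one at a time, over all permutations $\pi$ of $V(G_K)$, using only $O(n \log n)$ bits to store the current $\pi$. For each $\pi$, I compute $s(\pi(M')) \bmod p_\ell$ for all $\ell$ simultaneously by making a single linear pass over $G_K$ in the read-only memory, accumulating $2^{\ell(\pi(i),\pi(j))}$ into fresh counters modulo each $p_\ell$. I output \yes\ if all $k$ fingerprints match the stored $c_\ell$ for some $\pi$, and \no\ otherwise. Correctness is inherited directly from the proof of Theorem~\ref{thm:exact}: a correct isomorphism yields identical fingerprints with probability $1$ (one-sided error), while for any non-isomorphic $\pi$ the fingerprints disagree with probability $1 - 1/\Omega(n^n)$, which absorbs a union bound over the $n!$ permutations.

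I do not anticipate a substantive obstacle here — streaming the edges in arbitrary order plays the same role as summing contributions up the BFS tree in the distributed algorithm, since both are just accumulating a sum. The only point that needs verification is that the writable memory stays within $O(n \log n)$ bits, which follows because the only persistent state across the stream is the list of primes and counters, and the per-permutation comparison phase needs only the current $\pi$, a set of $k$ scratch counters, and read-only access to $G_K$.
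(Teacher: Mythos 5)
Your proposal is correct and follows essentially the same route as the paper: maintain $s(M) \bmod p$ for $O(n)$ random primes incrementally over the edge stream, then enumerate permutations of $V(G_K)$ in post-processing and compare fingerprints, with the union bound over $n!$ permutations absorbed by the $1-1/\Omega(n^n)$ soundness of the equality test. The only detail the paper makes explicit that you leave implicit is the $O(n\log n)$-bit renaming table in case the node labels are not already in $[n]$.
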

\begin{proof}
Assume w.l.o.g. that the labels of the nodes in $G_U$ and $G_K$ are taken from $[n]$.
Otherwise we can re-name then by using a table of size $O(n\log n)$ bits.
Let $M$ denote the adjacency matrix of $G_U$.
We first compute $s(M) \mod p$ for every $p \in \mathcal{P}$ as in the proof of Theorem~\ref{thm:exact}, in one-pass, using $O(n\log n)$ bits of space.
We then go over all permutations of the nodes of $G_K$ and perform the same computation for the corresponding adjacency matrix.
We accept if and only if there exists a permutation $\pi$ for which $s(M) \equiv s(M') \mod p$ every $p\in \mathcal{P}$, where $M'$ denotes the adjacency matrix of $G_K$ after we permuted the nodes according to $\pi$.
Observe that we can go over the permutations one by one according to the lexicographical order by using $O(n\log n)$ bits of space.
Therefore, the total space the algorithm uses is $O(n\log n)$, as desired.
\end{proof}

\def\wG{{\widetilde{G}}}

\section{Lower Bounds}\label{sec:lb}
In this section we establish two lower bounds. The first is for the decision variant of \GI, and the second is for the testing variant.

For the decision variant, we prove a near-quadratic lower bound for any deterministic distributed algorithm in the \congest\ model. The second lower bound states that any Isomorphism testing distributed algorithm requires diameter time. This lower bound holds also for randomized algorithms, and in fact holds in the \local\ model, even if \emph{all} vertices are given as an input the graph $G_K$.\footnote{In the \local\ model there is no limitation on message size per round per edge. Obviously, a lower bound in the \local\ model also applies to the \congest\ model.}
\subsection{An $\Omega(n^2/\log n)$ Lower Bound for Deciding Isomorphism Deterministically}\label{sec:detlb}

The decision variant of our Isomorphism testing problem is as follows: if $G_K$ and $G_U$ are isomorphic, then all nodes should output \yes, while if the graphs are not isomorphic, at least one node should output \no.

\begin{theorem}\label{thm:detlb}
Any distributed deterministic algorithm in the \congest\ model for deciding whether $G_U$ is isomorphic to $G_K$
requires $\Omega(n^2 / \log n)$ rounds.
\end{theorem}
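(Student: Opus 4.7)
I would reduce the deterministic two-party Set-Equality problem to deciding graph isomorphism in \congest. Recall that in Set-Equality, Alice holds $x \in \{0,1\}^k$ and Bob holds $y \in \{0,1\}^k$, and any deterministic protocol deciding whether $x = y$ requires $\Omega(k)$ bits of communication. The goal is to construct a family of graphs $\{G_{x,y}\}$ on $n$ nodes with $k = \Theta(n^2)$, and then simulate a hypothetical fast distributed algorithm by a communication-efficient two-party protocol, charging each round a logarithmic overhead for each edge in a small cut.

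The graph family $G_{x,y}$ is built on $n$ vertices split into two ``sides'' $A$ and $B$ of size $n/2$, joined by a constant number of ``bridge'' edges. The subgraph induced on $A$ encodes $x$: each of the $\Theta((n/2)^2) = \Theta(n^2)$ potential edges among the ``data'' nodes of $A$ corresponds to one bit of $x$. The subgraph on $B$ encodes $y$ analogously. To make the encoding rigid, I would attach a small distinctive identifier gadget to every node (for instance, a pendant path or small caterpillar whose length/shape uniquely marks the node within its side), and reserve a few vertices of distinguished degree to serve as ``anchors.'' The rigidity should ensure that every isomorphism between $G_{x,y}$ and $G_{x',y'}$ either maps $A\!\to\!A$ and $B\!\to\!B$ respecting the identifiers bit-by-bit, or swaps the two sides wholesale; hence $G_{x,y} \cong G_{x',y'}$ iff $\{x,y\}=\{x',y'\}$ as multisets, and in particular $G_{x,y}\cong G_{x,x}$ iff $x=y$.

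For the reduction, Alice sets $G_K := G_{x,x}$ (held only by her, as in our upper bound) and the simulated network is $G_U := G_{x,y}$. Alice and Bob fix in advance a canonical, input-independent assignment of IDs and port numbers to the $n$ vertices; Alice simulates all vertices of $A$ (including the root that holds $G_K$) and Bob simulates all vertices of $B$. In each round, the internal messages on each side are computed by the respective player with no communication; only the $O(\log n)$-bit messages traversing the $O(1)$ bridge edges need to be exchanged. A deterministic algorithm running in $R$ rounds therefore yields a deterministic Set-Equality protocol using $O(R\log n)$ bits, which forces $R = \Omega(k/\log n) = \Omega(n^2/\log n)$.

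The main obstacle is the graph construction itself: one must simultaneously guarantee (i) that each side encodes $\Theta(n^2)$ independent bits, (ii) that the identifier gadgets make the structure rigid so that no ``accidental'' isomorphism can arise when $x \neq y$ (e.g., the gadgets must be chosen so their attachments cannot be mimicked by data edges), and (iii) that only $O(1)$ bridge edges connect the sides, keeping the per-round communication cost at $O(\log n)$. A secondary subtlety is verifying that Alice and Bob can fix the IDs and port assignments before knowing $(x,y)$, so that the behavior of the simulated algorithm is a function only of the edges it actually sees as it is driven by the two players.
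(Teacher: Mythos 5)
Your plan follows the same route as the paper: a reduction from deterministic Set-Equality on a universe of size $\Theta(n^2)$, a two-sided graph in which each side encodes one player's input via $\Theta(n^2)$ potential ``data'' edges, an $O(1)$-edge cut between the sides so that an $R$-round \congest\ algorithm yields an $O(R\log n)$-bit protocol, and $G_K := G_{x,x}$ held by Alice. The simulation step, the choice of $G_K$, and the observation that IDs and ports can be fixed input-independently all match the paper's argument.

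The one place where the proposal is not yet a proof is exactly the point you flag as the main obstacle: the rigid encoding. The instantiation you sketch---a distinct pendant path or caterpillar attached to \emph{every} data node---does not fit in the vertex budget: uniquely marking $\Theta(n)$ nodes by pairwise non-isomorphic pendant trees costs $\Omega(\log n)$ vertices per gadget (and $\Theta(n)$ vertices per gadget if you use path lengths), which inflates the graph to $\Omega(n\log n)$ or $\Omega(n^2)$ vertices and degrades the bound by polylogarithmic or polynomial factors. The paper sidesteps this entirely: on each side the $2k$ data nodes are threaded onto two paths $a_1^0,\ldots,a_1^{k-1}$ and $a_2^0,\ldots,a_2^{k-1}$ (so a node's identity is its position along its path, at zero amortized cost), the input bits are the cross edges between the two paths, and only $O(1)$ extra vertices are used as anchors: degree-one ``tails'' marking $a_1^0$ and $a_2^0$, a hub $u$ adjacent to all of $A_1$, and an asymmetric number of tails on the two hubs ($2$ on Alice's side, $3$ on Bob's) to pin down which side is which. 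One then recovers $A$ and $B$ from the unlabeled structure by locating the hubs via their tail counts and unwinding the paths from their marked endpoints. (Your weaker rigidity claim---isomorphism only up to swapping the sides, i.e., equality of multisets $\{x,y\}$---would in fact suffice for the reduction, since $\{x,y\}=\{x,x\}$ forces $x=y$; the paper's side-breaking tails just make the recovery argument cleaner.) So the approach is right, but the construction you defer is where the actual content of the proof lies, and the specific gadget you propose would need to be replaced by something like the paper's path-plus-anchors scheme.
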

\begin{proof}
We reduce the problem of Set-Equality to the problem of deciding Isomorphism.
 The reduction is as follows. Alice and Bob each receives as an input a subset of $k^2$ elements, $A\subseteq\{(x,y) \mid x\in \{a_1^0,\ldots, a_1^{k-1}\}, y \in \{a_2^0,\ldots, a_2^{k-1}\}\}$ and $B\subseteq\{(x,y) \mid x\in \{b_1^0,\ldots, b_1^{k-1}\}, y \in \{b_2^0,\ldots, b_2^{k-1}\}\}$, respectively. According to their inputs they construct a pair of graphs, $G_U$ and $G_K$, such  that $A=B$ if and only if $G_U$ is isomorphic to $G_K$, as described momentarily.
Since, Set-Equality has a deterministic communication complexity which is linear in size of the universe~\cite{DBLP:books/daglib/0011756}, which is $\Omega(k^2)$ in this case, it follows that any distributed deterministic algorithm in the \congest\ model for deciding whether $G_U$ is isomorphic to $G_K$ requires $\Omega(n^2 / \log n)$ rounds.

We now describe the graph $G_U$ which is constructed by Alice and Bob.
The set of nodes in Alice's graph is composed of $u, u', u''$, the subsets $A_1 = \{a_1^0,\ldots, a_1^{k-1}\}$ and $A_2 = \{a_2^0,\ldots, a_2^{k-1}\}$, and the nodes $t_{A_1}, t_{A_2}, t_{u}^1, t_{u}^2$ (see Figure~\ref{fig:detlb}).
The subgraph induced on the nodes in $A_1$ is the path $a_1^0,\ldots, a_1^{k-1}$.
Similarly, the subgraph induced on the nodes in $A_2$ is the path $a_2^0,\ldots, a_2^{k-1}$.
The nodes $u$ and $u''$ are adjacent to all nodes in $A_1$ and $A_2$, respectively.
Both nodes $u$ and $u''$ are adjacent to the node $u'$.
The node $u$ is adjacent to $t_u^1$ and $t_u^2$.
The nodes $a_1^0$ and $a_2^0$ are adjacent to the nodes $t_{A_1}$ and $t_{A_2}$, respectively.
Additionally, the edge $(a^x_1, a^y_2)$ belongs to Alice's subgraph if and only if the corresponding $(x,y)$ element is in $A$.

The subgraph of Bob is defined similarly only that the nodes $v, v', v'', t_{B_1}, t_{B_2}$ take the role of $u, u', u'', t_{A_1}, t_{A_2}$, respectively, and
the subsets $B_1$ and $B_2$ take the role of $A_1$ and $A_2$, respectively.
The edges between $B_1$ and $B_2$ are determined by Bob's input $B$.
The other difference is that $v$ is adjacent to three ``tails'' $t_v^1, t_v^2, t_v^3$ (whereas $u$ is adjacent to only two, $t_v^1, t_v^2$).

The subgraphs of Alice and Bob are connected by a single edge $(u, v)$.

The graph $G_K$ is constructed by Alice exactly as the graph $G_U$, the only difference is that Alice does not know the subset $B$, instead of $B$ Alice uses $A$ to determine both the edges between $A_1$ and $A_2$ and the edges between $B_1$ and $B_2$.

Clearly, if $A$ and $B$ are such that $(a_1^i, a_2^j) \in A$ if and only if $(b_1^i, b_2^j) \in B$, then $G_U$ and $G_K$ are isomorphic.
We next prove that if this is not the case then $G_U$ and $G_K$ are not isomorphic.
To show this we prove that given the structure of $G_U$, namely $G_U$ with arbitrary labels on the nodes, we are able to recover the subsets $A$ and $B$ (i.e. there is a one-to-one correspondence between $G_U$ and the subsets $A$ and $B$).
We first observe that it is possible to identify $u$ and $v$ based on the structure of $G_U$ because these are the only nodes that are neighbors to $2$ and $3$ nodes with degree exactly $1$, respectively.
Next, we identify $u'$ and $v'$ - these are the nodes that have degree exactly $2$ and are adjacent to $u$ and $v$, receptively.
Once we identify $u'$ and $v'$ we can identify $u''$ and $v''$ and in turn identify $A_2$ and $B_2$.
The nodes in $A_1$ and $B_1$ are the neighbors of $u$ and $v$, excluding $u'$ and $v'$, receptively.
Finally, we are able to identify $a_1^0, a_2^0, b_1^0$ and $b_2^0$ since these are the only nodes that are adjacent to a node of degree exactly one - $t_{A_1}, t_{A_2}, t_{B_1}$ and $t_{B_2}$, respectively.
Once we identify $a_1^0$ and $A_1$ we can recover the path $a_1^0,\ldots, a_1^{k-1}$.
Likewise for the paths induced on $A_2, B_1$ and $B_2$.
This implies that we can recover the subsets $A$ and $B$, which concludes the proof.
  \begin{figure}
\centering
\includegraphics[width=0.9\linewidth]{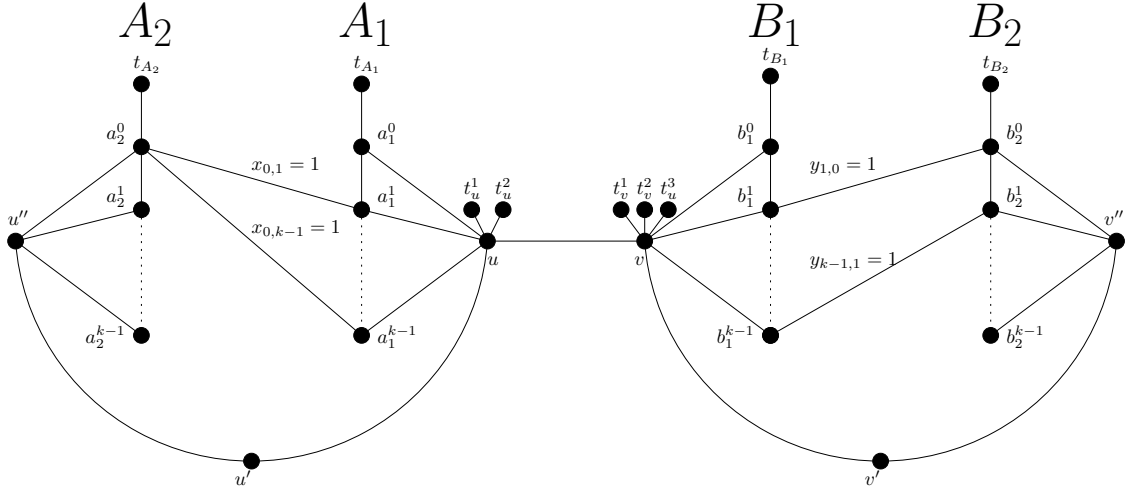}
\caption{Deciding Isomorphism lower bound construction.}
\label{fig:detlb}
\end{figure}
\end{proof}

\subsection{An $\Omega(D)$ lower bound for the Testing Isomorphism Problem}\label{sec:dlb}
In this section we establish the following theorem.
\begin{theorem}\label{thm:lb}
For any $D$ there exists a family of graphs with diameter $\Theta(D)$ such that any distributed two-sided error property testing algorithm for testing isomorphism on this family of graphs
requires $\Omega(D)$ rounds. This lower bound applies also in the \local\ model and even if all nodes receive $G_K$ as an input.
\end{theorem}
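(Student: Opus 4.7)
My plan is a three-scenario indistinguishability argument. Fix two graphs $A$ and $B$ on $n' = \Theta(n)$ vertices that are $\Omega(1)$-far from being isomorphic; such pairs exist both in the dense regime $m = \Theta(n^2)$ (for instance, two graphs with wildly different degree sequences) and in the sparse regime $m = \Theta(n)$ (for instance, regular graphs of different degrees). Let $P$ be an auxiliary path on $\Theta(D)$ vertices, and let $\oplus_P$ denote the operation that attaches one endpoint of $P$ to a designated vertex of the left operand and the other endpoint to a designated vertex of the right operand. Define three candidate networks
\[
G^{(1)}_U = A \oplus_P B,\qquad G^{(2)}_U = B \oplus_P A,\qquad G^{(3)}_U = A \oplus_P A,
\]
each of diameter $\Theta(D)$, and in all three cases take $G_K = A \oplus_P B$ given to every vertex (this simultaneously yields the \local-model statement and the case where all vertices know $G_K$).

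Correctness imposes the following. The graph $G^{(1)}_U$ is isomorphic to $G_K$ and $G^{(2)}_U$ is isomorphic to $G_K$ via the path-reversing bijection, so in both scenarios every vertex must output \accept\ with high probability. By the farness of $A$ and $B$, the graph $G^{(3)}_U$ is $\eps$-far from being isomorphic to $G_K$ for a suitable constant $\eps$, so in $G^{(3)}_U$ at least one vertex must output \reject\ with high probability.

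The core step is to choose identifiers, port numbers, and per-vertex random tapes so that each vertex of the left half and the left portion of $P$ has the \emph{same} label, incident-edge port labels, and random tape in $G^{(1)}_U$ and in $G^{(3)}_U$; each vertex of the right half and the right portion of $P$ has the same label, ports, and random tape in $G^{(2)}_U$ and in $G^{(3)}_U$; and the middle portion of $P$ is labeled identically across all three scenarios. Since $G_K$ is the same graph in all three scenarios, knowing $G_K$ does not break the coupling. Hence for any $t < D/c$ with a sufficiently small constant $c$, the $t$-round view (as a random variable) of each left vertex in $G^{(3)}_U$ is identical to its view in $G^{(1)}_U$, and each right vertex in $G^{(3)}_U$ has the same view distribution as in $G^{(2)}_U$. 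A union bound over the at most $n$ vertices then shows that in $G^{(3)}_U$ every vertex outputs \accept\ with high probability, contradicting the rejection requirement. Therefore any correct two-sided-error tester needs $t = \Omega(D)$ rounds.

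The main obstacle will be the careful bookkeeping required to couple views across the two paired scenarios in the presence of adversarial IDs, port numbers, and private random bits, and to certify that no ID collision or port mismatch arises inside $G^{(3)}_U$ when it inherits the left half's labeling from $G^{(1)}_U$ and the right half's labeling from $G^{(2)}_U$. A secondary obstacle is to pin down concrete $(A, B)$ pairs with diameter $O(D)$ that witness the farness condition in both the dense and sparse regimes, so that the lower bound covers both parameter ranges mentioned in the theorem statement.
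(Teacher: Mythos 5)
Your construction is essentially the paper's: two graphs that are far from isomorphic (witnessed by an edge-count gap) joined by a $\Theta(D)$-path, with the homogeneous composite indistinguishable in $<D/3$ rounds from the heterogeneous one; the paper phrases the indistinguishability via Yao's principle over the distribution $\{G_{1,2},G_{1,1},G_{2,2}\}$ rather than via your explicit coupling, and its appendix does exactly the ID/port bookkeeping you flag, introducing both label orderings of the two sides and of the path (your $B\oplus_P A$ plays the role of the paper's relabeled yes-instances). Two points need repair in your write-up. First, the farness of $A\oplus_P A$ from $A\oplus_P B$ does not follow merely from $A,B$ being far from isomorphic, since a bijection of the composites need not respect the two sides; you need the edge-count argument (as in the paper, $|E(G_1)|=(1+\eps)|E(G_2)|$), which your specific examples happen to supply but your stated justification does not. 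Second, the ``union bound over the at most $n$ vertices'' only yields a contradiction for testers with error $1/\mathrm{poly}(n)$; for constant-error testers you must instead union-bound over the \emph{two halves} using the joint coupling of each half's view (giving error at least $1/3$) and then amplify, or argue against a hard input distribution as the paper does to get error $1/2$ directly.
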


\begin{proof}
We define a family of graphs ${\mathcal H}$, where each $H \in {\mathcal H}$ takes the role of the unknown graph $G_U$. We fix $G_K \in {\mathcal H}$ and require from ${\mathcal H}$ that all graphs in ${\mathcal H} \setminus \{G_K\}$ are $\epsilon'$-far from being isomorphic to $G_K$.
Theorem~\ref{thm:lb} then follows by applying Yao's principle~\cite{yao1977probabilistic}.

\subparagraph*{Construction of ${\mathcal H}$.}
We begin our construction with a pair of graphs on $n$ nodes and diameter $\Theta(D)$, $G_1$ and $G_2$, such that $|E(G_1)| = (1+\eps)|E(G_2)|$.~\footnote{Observe that such graphs exist for any $D$ for $G_1$ and $G_2$ that are dense.
On the other extreme, if we want $G_1$ and $G_2$ to be bounded degree trees then it is possible to obtain such graphs for any $D = \Omega(\log n)$.}
Clearly, $G_2$ is $\epsilon$-far from being isomorphic to $G_1$ (since $G_1$ has $\eps |E(G_2)|$ more edges than $G_2$).
Let ${\mathcal H}$ denote a family of graphs where each graph in ${\mathcal H}$, denoted by $G_{i,j}$ for $i,j \in \{1,2\}$, is constructed by using two copies from $\{G_1,G_2\}$. The definition of $G_{i,j}$ is as follows:
\begin{enumerate}
	\item Let $p\triangleq(p_1,\ldots,p_{D})$ denote a path of $D$ nodes.
	\item Let $v_i$ and $v_j$ denote arbitrary nodes in $G_i$ and $G_j$, respectively.
	\item Identify, $v_i$ with $p_1$ and $v_j$ with  $p_{D}$, to obtain $G_{i,j}$. Observe that $ |V(G_{i,j})|=2n + D - 2 = O(n)$, and that the diameter of $G_{i,j}$ is $\Theta(D)$.
\end{enumerate}

Note that from the reasoning above, for every $G_{i,j}, G_{a,b}\in {\mathcal H}$, where $\{i,j\}\neq \{a,b\}$, $G_{i,j}$ is $\Omega(\eps)$-far from being isomorphic to $G_{a,b}$.

We assume towards a contradiction that there is a two-sided tester $\mathcal{A}$ that is correct with probability at least $2/3$ and runs for $\diam{G_U}/3$ rounds, where $\diam{G_U}$ denotes the diameter of the distributed network $G_U$.
The lower bound then follows from Yao's principle~\cite{yao1977probabilistic}.
We next specify the distribution over the inputs and explicitly apply Yao's principle.

We define a set of inputs to the distributed tester: (1)~we fix the known graph $G_K$ to be $G_{1,2}$, (2)~the unknown graph $G_U$ is chosen randomly from the distribution $\tilde{\cal{H}}$ which is defined as follows: the probability to obtain the graph $G_{1,2}$ is $1/2$ and each of the graphs $G_{1,1}$ and $G_{2,2}$ is obtained with probability $1/4$.
Now we consider two cases. The first case is that $\mathcal{A}$ outputs (correctly) \yes\ when $G_U$ is $G_{1,2}$.
By the definition of the problem this implies that all nodes output \yes.
Since the number of rounds of the tester is at most $D/3$, it follows that all nodes in both graphs $G_{1,1}$, and $G_{2,2}$, output \yes.
To verify this observe that for any node $u$ in either $G_{1,1}$ or $G_{2,2}$ there exists a node in $G_{1,2}$, $v$, such that the $(D/3)$-hop neighborhoods of $v$ and $u$ are the same (up to labels and port numbers~\footnote{We assume that the output of the algorithm is invariant to the labeling of the nodes and the port numbers of the edges. In the Appendix~\ref{app:liftassumplb} we remove this assumption and provide a general and more detailed proof.}).
Thus, in this case, $\mathcal{A}$ errs on $G_{1,1}$ and $G_{2,2}$.
In the second case $\mathcal{A}$ outputs \no\ when $G_U$ is $G_{1,2}$. Thus, in both cases $\mathcal{A}$ errs with probability at least $1/2$ when $G_U$ is drawn according to $\tilde{\cal H}$.

By Yao's principle it is implied that any randomized tester must err with probability at least $1/2$ as well, in contradiction to our assumption that $\mathcal{A}$ is correct with probability of at least $2/3$.

\end{proof} 
\section{Simulating Centralized Property Testing Algorithms}\label{sec:simulc}
In this section we prove the following claim.

\begin{claim}\label{clm:sim}
Let $\mathcal{A}$ be a centralized property testing algorithm that is allowed to make adjacency queries, incidence queries and degree queries.
There exists a distributed algorithm that can simulate $\mathcal{A}$ in $O(D \cdot q)$ rounds if $\mathcal{A}$ is adaptive and in $O(D + q)$ rounds if $\mathcal{A}$ is non-adaptive, where $D$ denotes the diameter of the graph and $q$ denotes the number of queries that $\mathcal{A}$ makes.
\end{claim}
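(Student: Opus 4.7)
The plan is to designate a single simulator node $r$, build a BFS tree of depth $D$ rooted at $r$ in $O(D)$ rounds, and then use $r$ to run $\mathcal{A}$ internally while routing each query/answer pair through the tree. Since a single query consists of a query type together with one or two node IDs, it fits in $O(\log n)$ bits; similarly, each answer (a bit for adjacency queries, an ID for incidence queries, or $O(\log n)$-bit integer for degree queries) fits in $O(\log n)$ bits. So every query and every answer is a single \congest\ message.

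For the adaptive case I would proceed query-by-query. The node $r$ computes the next query of $\mathcal{A}$ and floods it down the BFS tree, reaching every node in $D$ rounds. Only the target node $v$ mentioned in the query responds: it computes its degree, decides adjacency to the other endpoint (known locally), or reports its $i$-th neighbor (using the port numbering). The answer is then forwarded up the BFS tree toward $r$ in at most $D$ rounds. After receiving the answer, $r$ advances $\mathcal{A}$ by one step and proceeds to the next query. Since each of the $q$ queries costs $O(D)$ rounds, the total round complexity is $O(q \cdot D)$.

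For the non-adaptive case, all $q$ queries can be produced by $r$ in advance, since by definition they do not depend on the answers. $r$ pipelines the stream of $q$ queries down the BFS tree: the messages are forwarded layer by layer, so every node has received the entire query sequence after $O(D + q)$ rounds. Each node then locally prepares the answers to those queries in which it is the target, and the answers are pipelined back up the BFS tree toward $r$. Standard convergecast pipelining ensures that a total of at most $q$ answer-messages reach $r$ in $O(D + q)$ rounds, even in the presence of congestion near $r$: a single-token-per-edge-per-round schedule with priority given to tokens that have been in transit longest completes in at most $D + q$ rounds.

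The only real subtlety is guaranteeing that no edge is overloaded in the non-adaptive convergecast, because a node can be the target of many queries and thus originate many answers; I expect this to be the main bookkeeping step. It is, however, precisely the classical pipelining/scheduling result for routing $k$ $O(\log n)$-bit messages to a single root along a BFS tree in $O(D + k)$ rounds, and we apply it with $k = q$. The adaptive version needs no such scheduling argument because queries are issued sequentially, but it pays an extra factor of $D$ per query. Both bounds therefore follow directly.
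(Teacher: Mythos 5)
Your proposal is correct and follows essentially the same approach as the paper's proof: elect a leader, build a BFS tree, handle the adaptive case query-by-query in $O(D)$ rounds each, and handle the non-adaptive case by broadcasting all queries and pipelining the answers back in $O(D+q)$ rounds. Your write-up is in fact somewhat more detailed than the paper's (which does not spell out the message-size accounting or the convergecast scheduling), but the argument is the same.
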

\begin{proof}
In order to simulate $\mathcal{A}$ on the distributed network we first pick a leader $r$ and construct a BFS tree rooted at $r$ in $D$ rounds.
We also assume w.l.o.g. that $r$ knows the size of the network, $n$.
If $\mathcal{A}$ is non-adaptive then $r$ can determine the queries that $\mathcal{A}$ makes by using (only) the knowledge of $n$ and random bits.
The root $r$ sends this information to the entire network in $D + q$ rounds and in additional $D+ q$ rounds the answers are gathered by pipelining back at $r$.
Once that all the answers are gathered, $r$ can complete the simulation of $\mathcal{A}$. The outcome is then sent to all the nodes in the network.
If $\mathcal{A}$ is adaptive then $r$ simulates $\mathcal{A}$ step by step where each query is gathered in $2D$ rounds. This yields a round complexity of $O(D \cdot q)$, as desired.
\end{proof}

\subparagraph*{Application to minor-free graphs and Outerplanar graphs.}

For graphs which are minor-free with a bound $d=O(1)$ on the degree Newman and Sohler~\cite{NS13} argued the following in the centralized property testing model: (a)~graph isomorphism is possible with constant number of queries, and (b)~any property is testable with constant number of queries in the respective families of graphs. The latter means that there is an algorithm that performs a constant number of queries and succeeds with constant probability. Here, `constant' means independent of $n$.

For graphs which are forests and outerplanar (which include forests), Kusumoto and Yoshida~\cite{KY14} and Babu, Khoury, and Newman~\cite{BKN16} argued, respectively,  the following in the centralized property testing model: (a)~graph isomorphism is possible with $\poly\log n$ number of queries, and (b)~any property is testable with the same number of queries in the respective families of graphs. The latter means that there is an algorithm that performs $\poly\log n$ number of queries and succeeds with constant probability.

Their results are summarized in the following theorems.

\begin{theorem}[{\cite[Thm.~3.2,~3.3 ]{NS13}}]\label{thm:newmans}
 Given an oracle access to a minor-free graph with maximum degree $d=O(1)$, any graph property is testable with constant number of  queries. This testing algorithm succeeds with constant probability. Specifically, testing isomorphism of two such graphs can be done in a constant number of queries and with constant probability.
\end{theorem}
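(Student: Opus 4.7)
The plan is to leverage hyperfiniteness of bounded-degree minor-free graphs. Recall that a family $\mathcal{F}$ of graphs of maximum degree $d$ is \emph{hyperfinite} if for every $\rho > 0$ there exists a constant $k = k(\rho, d)$ such that every $G \in \mathcal{F}$ admits a partition of its vertex set into parts of size at most $k$ after removing at most $\rho n$ edges. For bounded-degree minor-free graphs this is known via planar-type separator theorems (Alon--Seymour--Thomas), and is the reason one should expect a constant-query tester at all.

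The core of the argument should be a structural lemma: for $G, H$ in the family, if the ``local view'' distributions $D_G^r, D_H^r$ (each the distribution over isomorphism types of the $r$-ball around a uniformly random vertex, for $r = r(\eps, d)$ sufficiently large) are close in total variation, then $G$ and $H$ are $\eps$-close to being isomorphic. The proof uses the hyperfinite partition to match connected fragments of the two graphs by isomorphism type, paying an $O(\rho \cdot d n)$ cost for removed boundary edges plus a cost proportional to the transportation distance between the multisets of fragment types.

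Given the lemma, the tester is straightforward: sample $s = s(\eps, d) = O(1)$ vertices uniformly at random from each graph and explore the $r$-neighborhood of each via BFS, each touching at most $d^r = O(1)$ vertices. Build empirical estimates of $D_G^r$ and $D_H^r$ and accept iff they are within the threshold dictated by the lemma. For a general property $\Pi$ the tester instead checks whether the empirical distribution is consistent with the $D^r$ of some $G' \in \Pi$; soundness uses the contrapositive of the structural lemma applied to any hypothetical $G' \in \Pi$ that is close to $G$. Sample complexity is $O(1)$ because neighborhood types live in a universe of bounded size, so empirical distributions concentrate around their expectations with a constant number of samples.

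The main obstacle is the structural lemma itself: upgrading closeness in local-view statistics to closeness in global edit distance is the nontrivial combinatorial content. One must choose the right notion of ``fragment type'' (the isomorphism class of a partition piece together with the local structure of its boundary so that reassembly is meaningful), then produce an explicit alignment between the two hyperfinite partitions via a transportation argument that pays only $O(\eps n)$ edits in total. One must also argue that a hyperfinite partition of $G$ is, up to $o(n)$ error, \emph{detectable} from $r$-neighborhood statistics alone, which requires choosing $r \gg k$ so that a random vertex sees its entire fragment within its $r$-ball with high probability. Once this is in hand the remaining sampling analysis is routine, and specializing to testing isomorphism amounts to comparing two empirical distributions directly instead of testing membership in a class.
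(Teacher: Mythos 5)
This statement is not proved in the paper at all: it is imported verbatim as a cited result of Newman and Sohler \cite{NS13} (their Theorems~3.2 and~3.3), and the paper only uses it as a black box inside the simulation argument of Claim~\ref{clm:sim} to obtain Corollary~\ref{coro:hyper}. So there is no ``paper proof'' to compare against; the relevant comparison is with the argument in \cite{NS13} itself.

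Your sketch does track the actual Newman--Sohler strategy: hyperfiniteness of bounded-degree minor-free families (via separator theorems), reduction of the global property to the distribution of $r$-disc isomorphism types around a random vertex, and a constant-sample empirical estimate of that distribution. But the entire mathematical content of the theorem is the structural lemma you explicitly defer --- that closeness of local-view distributions of two hyperfinite graphs implies closeness in edit distance --- and your proposal does not supply it. In \cite{NS13} this is established not by ``detecting'' the hyperfinite partition from $r$-neighborhood statistics (which is problematic as stated, since a random vertex's $r$-ball generally does not reveal the partition piece containing it), but by showing that any hyperfinite graph can be modified with at most $\eps d n$ edge edits into a disjoint union of constant-size components whose multiset of isomorphism types is approximately determined by the $k$-disc distribution; two graphs with close $k$-disc statistics are then both close to the same union of components. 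Without that lemma (or an equivalent), your tester's soundness for general properties and for isomorphism is unsubstantiated. As a proof attempt it is therefore an outline with the decisive step missing; as a reconstruction of the cited theorem's proof idea it is faithful in spirit.
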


\begin{theorem}[{\cite[Thm.~1.3,~1.1 ]{KY14}, \cite[Thm.~4.3,~4.2]{BKN16}}]\label{thm:forests}
 Given an oracle access to a $k$-edge-outerplanar graph, any graph property is testable with $\poly\log n$  queries.~\footnote{Definition~2.1 from~\cite{BKN16}: A graph $G$ is $1$-edge-outerplanar if it has a planar embedding in which all vertices of $G$ are on the outer face. A graph $G$ is $k$-edge-outerplanar if $G$ has a planar embedding such that if all edges on the exterior face are deleted, the connected components of the remaining graph are all $(k-1)$-edge-outerplanar.} This testing algorithm succeeds with constant probability. Specifically, testing isomorphism of two $k$-edge-outerplanar graph can be done in $\poly\log n$  queries and with constant probability.
\end{theorem}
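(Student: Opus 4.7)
The plan is to reduce both parts of the statement—isomorphism testing as well as the general property-testing claim—to a single technical object: a local partition oracle for $k$-edge-outerplanar graphs with $\poly(\log n, 1/\eps)$ query complexity.

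First I would establish a quantitative hyperfiniteness statement for the class: for every $\delta>0$, every $k$-edge-outerplanar graph $G$ admits a partition of $V(G)$ into connected parts of size $\poly(k/\delta)$ such that only a $\delta$-fraction of $E(G)$ crosses parts. This follows from the $O(k)$ bound on treewidth of $k$-edge-outerplanar graphs combined with a balanced-separator recursion. I would then implement this partition locally: given a queried vertex $v$, the oracle returns a description of the part containing $v$ using $\poly(\log n, 1/\delta, k)$ queries, with answers on independent queries jointly consistent with one random partition drawn from a fixed distribution.

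Given such an oracle, both parts of the statement follow from a common sampling scheme. Take $q=\poly(\log n, 1/\eps)$ uniform random vertices, invoke the oracle at each, and form a histogram over isomorphism types of the returned parts (with the queried vertex marked). For isomorphism testing, isomorphic graphs induce identical histograms, while graphs that are $\eps$-far from isomorphic induce histograms differing by $\Omega(\eps)$ in $L_1$ distance, because outside the $O(\delta n)$ cut edges the histogram determines the graph up to isomorphism; a standard $L_1$ closeness test on the histograms suffices. For the general property-testing claim, the same compression shows that $G$ is $O(\delta)$-close to a graph determined (up to isomorphism) by its distribution of labeled part-types, so testing a property $\Pi$ reduces to deciding whether the empirical distribution lies in the region prescribed by $\Pi$, again within $\poly(\log n, 1/\eps)$ queries.

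The main obstacle is obtaining $\poly\log n$ query complexity for the partition oracle itself. Known minor-free partition oracles achieve complexity independent of $n$ but not polylogarithmic in $1/\delta$, so they are insufficient here. For $k$-edge-outerplanar graphs one must exploit the recursive structure of the $k$ nested outer faces: a bounded-fanout BFS of depth $O(\log n)$ from $v$ reconstructs the slice of the decomposition that contains $v$, after which the partition is computed canonically within that slice. The delicate step is arranging the shared randomness used to break ties so that answers on independent queries are consistent with a single global partition, while simultaneously keeping the per-query exploration polylogarithmic and avoiding spurious correlations that would inflate the sample complexity of the downstream distribution test.
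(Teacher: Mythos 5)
This theorem is not proved in the paper at all: it is imported verbatim from \cite{KY14} and \cite{BKN16} and used only as a black box to feed the simulation argument of Claim~\ref{clm:sim}. So your proposal has to be judged on its own, and it has a genuine gap at its very first step.

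The quantitative hyperfiniteness claim --- that every $k$-edge-outerplanar graph admits a partition into connected parts of size $\poly(k/\delta)$ with only a $\delta$-fraction of $E(G)$ crossing parts --- is false in the general (unbounded-degree) model, which is exactly the model in which Theorem~\ref{thm:forests} lives. The star $K_{1,n-1}$ is $1$-edge-outerplanar (indeed a tree), yet any partition into parts of size at most $t$ leaves at least $n-t$ of its $n-1$ edges crossing between parts, so no $\delta$-fraction bound independent of $n$ is possible. The flaw in your derivation is that the treewidth-$O(k)$ balanced-separator recursion controls the number of separator \emph{vertices}, not the number of \emph{edges} incident to them; only under a degree bound do the two coincide, which is why this route works for the bounded-degree minor-free setting of Theorem~\ref{thm:newmans} but not here. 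Everything downstream (the local partition oracle, the part-type histogram, the $L_1$ test) inherits this failure: a local exploration from a leaf of the star immediately encounters a vertex of fan-out $n-1$, so the ``bounded-fanout BFS of depth $O(\log n)$'' does not exist, and the histogram of small part-types cannot determine the graph up to $\eps m$ edge modifications. The actual arguments of \cite{KY14} and \cite{BKN16} are built precisely around this obstacle: they first isolate and handle high-degree (``heavy'') vertices through a separate recursive decomposition of the outerplanar structure, and only then reduce to a distribution-testing problem over the remaining low-degree pieces. If you want to salvage your plan, you would need to replace the hyperfiniteness step by a decomposition that treats high-degree vertices as part of the retained global summary rather than as members of small parts.
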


Given the simulation argument above and Theorems~\ref{thm:newmans}, ~\ref{thm:forests}, we obtain the following corollaries.
\begin{corollary}\label{coro:hyper}
Any property is testable in \congest\ for minor-free graphs with maximum degree $d=O(1)$ within $O(D)$ rounds in the bounded-degree model.
Specifically, this holds for the graph isomorphism problem. \mnote{fill in. Argue that it is tight. Refer to this in the intro.}
\end{corollary}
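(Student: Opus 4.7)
The plan is a direct composition of Claim~\ref{clm:sim} with Theorem~\ref{thm:newmans}. Since Newman--Sohler guarantee that for bounded-degree minor-free graphs every property is testable with $q = O(1)$ queries (where the hidden constant depends on $\eps$ and $d$ but is independent of $n$), plugging this into the simulation bound of Claim~\ref{clm:sim} yields an $O(D \cdot q) = O(D)$ round complexity even if the centralized tester is adaptive. No non-adaptivity argument is therefore required.

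Concretely, I would first have the network elect a leader $r$ and build a BFS tree in $O(D)$ rounds; the leader can also learn $n$ during this phase. The leader then runs the centralized tester internally. Each adjacency, incidence, or degree query is translated into a message that is routed along the BFS tree to the relevant node and whose answer is returned along the tree, consuming $O(D)$ rounds per query. After $O(1)$ such query rounds the tester produces a verdict, which $r$ broadcasts down the tree in an additional $O(D)$ rounds so that all nodes can output the answer. For the graph isomorphism specialization, the setting of Theorem~\ref{thm:newmans} assumes oracle access to both graphs; if $G_K$ is given to all nodes this is immediate, and if $G_K$ is given only to a single node we designate that node as $r$ so that $G_K$-queries incur no communication.

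For tightness (the author's margin note), I would invoke Theorem~\ref{thm:lb}: the construction there can be instantiated with bounded-degree building blocks $G_1,G_2$ (as the footnote in that proof already notes for $D = \Omega(\log n)$), so the $\Omega(D)$ lower bound applies to this graph family as well and matches the $O(D)$ upper bound up to constants.

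The main obstacle is not technical but conceptual: one must verify that the oracle model used by~\cite{NS13} (adjacency, incidence, and degree queries in the bounded-degree model) is exactly the one supported by Claim~\ref{clm:sim}. Once this correspondence is noted, no further calculation is needed, and the corollary follows by a one-line composition.
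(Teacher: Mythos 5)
Your proposal is correct and matches the paper's own (one-line) justification: the corollary is obtained exactly by composing Claim~\ref{clm:sim} with Theorem~\ref{thm:newmans}, observing that $q=O(1)$ makes even the adaptive bound $O(D\cdot q)=O(D)$. The additional implementation details and the tightness remark via Theorem~\ref{thm:lb} are consistent with the paper but not needed beyond what the paper already states.
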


\begin{corollary}\label{coro:outer}
Any property is testable in \congest\ for trees and $k$-edge-outerplanar graphs within $\tilde{O}(D)$ rounds in the general model.
Specifically, this holds for the graph isomorphism problem. \mnote{fill in. Argue that it is tight. Refer to this in the intro.}
\end{corollary}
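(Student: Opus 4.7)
The plan is to deduce Corollary~\ref{coro:outer} as an almost immediate consequence of Claim~\ref{clm:sim} combined with Theorem~\ref{thm:forests}. Concretely, Theorem~\ref{thm:forests} supplies, for any graph property $\Pi$ on $k$-edge-outerplanar graphs (and in particular for the isomorphism-to-$G_K$ property), a centralized tester $\mathcal{A}_\Pi$ that makes $q = \poly\log n$ queries in the general-graph query model and succeeds with constant probability. The first step is therefore to take such a tester off the shelf and treat it as a black box whose queries are of the three kinds allowed in Claim~\ref{clm:sim}, namely adjacency, incidence and degree queries.

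Next I would feed $\mathcal{A}_\Pi$ into the simulation of Claim~\ref{clm:sim}. Even in the worst case where $\mathcal{A}_\Pi$ is fully adaptive, the round complexity of the simulated distributed algorithm is $O(D \cdot q) = O(D \cdot \poly\log n) = \tilde{O}(D)$; if $\mathcal{A}_\Pi$ happens to be non-adaptive, we even get $O(D + q) = \tilde{O}(D)$ rounds. Success with constant probability can then be boosted to high probability (or any constant of our choosing) by running $O(\log n)$ independent copies and taking a majority vote at the root $r$, which at most inflates the round count by a $\log n$ factor and so remains within the $\tilde{O}(D)$ budget; finally $r$ broadcasts the outcome down the BFS tree in $O(D)$ rounds so that every node outputs the same answer. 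Specializing $\Pi$ to ``isomorphic to the prescribed $G_K$'' then yields the isomorphism statement in the corollary.

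The main obstacle I anticipate is purely a bookkeeping one: I have to make sure that the queries used by the centralized testers of~\cite{KY14,BKN16} really fit into the adjacency/incidence/degree menu that Claim~\ref{clm:sim} is written for, and that each query answer is encodable in $O(\log n)$ bits so that the pipelining in Claim~\ref{clm:sim} is in fact legal in \congest. Both are standard in the general-graph testing model~\cite{KKR04}---an incidence query returns the $i$-th neighbor's ID and a degree query returns an integer in $[n]$, both $O(\log n)$ bits---so no new idea is needed, only a sentence of justification. A secondary point worth mentioning is that in the isomorphism specialization the graph $G_K$ must be known to the simulating node; this is handled by the hypothesis of Corollary~\ref{coro:outer} in the same way as in the rest of the paper (either $G_K$ is given to $r$, or it is given to all nodes and $r$ simply uses its own copy when locally running $\mathcal{A}_\Pi$).
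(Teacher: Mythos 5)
Your proposal is correct and follows essentially the same route as the paper, which derives Corollary~\ref{coro:outer} directly by plugging the $\poly\log n$-query centralized testers of Theorem~\ref{thm:forests} into the simulation of Claim~\ref{clm:sim} to get $O(D\cdot\poly\log n)=\tilde{O}(D)$ rounds. The extra details you supply (amplification and broadcasting the verdict) are harmless elaborations of what the paper leaves implicit.
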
 

\fi


\section{Conclusions}
In this paper we provided both upper and lower bounds for the problem of testing isomorphism to a known graph in the \congest\ model.
The main question that we leave open is whether it is possible to improve the complexity of the randomized algorithm for the (exact) decision variant.
For the property testing variant, we provided, up to  poly-logarithmic factors, tight bounds for graphs which are dense, and for special families of sparse graphs.
For the intermediate spectrum of density,  the complexity of the property testing variant remains an open question.

As a further research we propose studying the (more complex) problem of testing isomorphism to a graph which is not given explicitly as a parameter.
One option is to provide only query access to this graph. Another option is to assume that the network is composed of two graphs and that each edge is marked according to which graph it belongs to.
Namely, for a set of nodes $V$ and two sets of edges $E_1$ and $E_2$, the goal is to test if $G_1 = (V, E_1)$ and $G_2 = (V, E_2)$ are isomorphic when we run on the network $G = (V, E_1\cup E_2)$.
\ifnum\icalp=0
\bibliographystyle{plain}
\fi
\bibliography{refs}

\begin{thebibliography}{10}

\bibitem{abboud2017fooling}
Amir Abboud, Keren Censor-Hillel, Seri Khoury, and Christoph Lenzen.
\newblock Fooling views: A new lower bound technique for distributed
  computations under congestion.
\newblock {\em arXiv preprint arXiv:1711.01623}, 2017.

\bibitem{DBLP:journals/corr/abs-1901-01630}
Amir Abboud, Keren Censor{-}Hillel, Seri Khoury, and Ami Paz.
\newblock Smaller cuts, higher lower bounds.
\newblock {\em CoRR}, abs/1901.01630, 2019.

\bibitem{BKN16}
Jasine Babu, Areej Khoury, and Ilan Newman.
\newblock Every property of outerplanar graphs is testable.
\newblock In {\em Approximation, Randomization, and Combinatorial Optimization.
  Algorithms and Techniques, {APPROX/RANDOM} 2016, September 7-9, 2016, Paris,
  France}, volume~60 of {\em LIPIcs}, pages 21:1--21:19. Schloss Dagstuhl -
  Leibniz-Zentrum f{\"{u}}r Informatik, 2016.

\bibitem{DBLP:conf/icalp/BonneC19}
Matthias Bonne and Keren Censor{-}Hillel.
\newblock Distributed detection of cliques in dynamic networks.
\newblock In {\em 46th International Colloquium on Automata, Languages, and
  Programming, {ICALP} 2019, July 9-12, 2019, Patras, Greece}, pages
  132:1--132:15, 2019.

\bibitem{CFSV16}
Keren Censor{-}Hillel, Eldar Fischer, Gregory Schwartzman, and Yadu Vasudev.
\newblock Fast distributed algorithms for testing graph properties.
\newblock {\em Distributed Computing}, 32(1):41--57, 2019.

\bibitem{chang2019distributed}
Yi-Jun Chang, Seth Pettie, and Hengjie Zhang.
\newblock Distributed triangle detection via expander decomposition.
\newblock In {\em Proceedings of the Thirtieth Annual ACM-SIAM Symposium on
  Discrete Algorithms}, pages 821--840. SIAM, 2019.

\bibitem{chen1994parallel}
Lin Chen.
\newblock Parallel graph isomorphism detection with identification matrices.
\newblock In {\em Proceedings of the International Symposium on Parallel
  Architectures, Algorithms and Networks (ISPAN)}, pages 105--112. IEEE, 1994.

\bibitem{chen1996graph}
Lin Chen.
\newblock Graph isomorphism and identification matrices: Parallel algorithms.
\newblock {\em IEEE Transactions on Parallel and Distributed Systems},
  7(3):308--319, 1996.

\bibitem{czumaj2019detecting}
Artur Czumaj and Christian Konrad.
\newblock Detecting cliques in congest networks.
\newblock {\em Distributed Computing}, pages 1--11, 2019.

\bibitem{eden2019sublinear}
Talya Eden, Nimrod Fiat, Orr Fischer, Fabian Kuhn, and Rotem Oshman.
\newblock Sublinear-time distributed algorithms for detecting small cliques and
  even cycles.
\newblock In {\em 33rd International Symposium on Distributed Computing (DISC
  2019)}. Schloss Dagstuhl-Leibniz-Zentrum fuer Informatik, 2019.

\bibitem{DISC17}
Guy Even, Orr Fischer, Pierre Fraigniaud, Tzlil Gonen, Reut Levi, Moti Medina,
  Pedro Montealegre, Dennis Olivetti, Rotem Oshman, Ivan Rapaport, and Ioan
  Todinca.
\newblock Three notes on distributed property testing.
\newblock In {\em Proceedings of the 41st International Symposium on
  Distributed Computing (DISC)}, pages 15:1--15:30, 2017.

\bibitem{ELM17}
Guy Even, Reut Levi, and Moti Medina.
\newblock Faster and simpler distributed algorithms for testing and correcting
  graph properties in the congest-model.
\newblock {\em CoRR}, abs/1705.04898, 2017.

\bibitem{feigenbauma2005graph}
Joan Feigenbauma, Sampath Kannanb, Andrew McGregorb, Siddharth Surib, and Jian
  Zhanga.
\newblock On graph problems in a semi-streaming model.
\newblock {\em Theoretical Computer Science}, 348:207--216, 2005.

\bibitem{FY17}
Hendrik Fichtenberger and Yadu Vasudev.
\newblock Distributed testing of conductance.
\newblock {\em arXiv preprint arXiv:1705.08174}, 2017.

\bibitem{fischer2008testing}
Eldar Fischer and Arie Matsliah.
\newblock Testing graph isomorphism.
\newblock {\em SIAM Journal on Computing}, 38(1):207--225, 2008.

\bibitem{FGO17a}
Orr Fischer, Shay Gershtein, and Rotem Oshman.
\newblock On the multiparty communication complexity of testing
  triangle-freeness.
\newblock In {\em Proceedings of the 2017 ACM Symposium on Principles of
  Distributed Computing (PODC)}, pages 111--120. ACM, 2017.

\bibitem{fischer2018possibilities}
Orr Fischer, Tzlil Gonen, Fabian Kuhn, and Rotem Oshman.
\newblock Possibilities and impossibilities for distributed subgraph detection.
\newblock In {\em Proceedings of the 30th on Symposium on Parallelism in
  Algorithms and Architectures}, pages 153--162, 2018.

\bibitem{FGO17c}
Orr Fischer, Tzlil Gonen, and Rotem Oshman.
\newblock Distributed property testing for subgraph-freeness revisited.
\newblock {\em CoRR}, abs/1705.04033, 2017.

\bibitem{FMORT17}
Pierre Fraigniaud, Pedro Montealegre, Dennis Olivetti, Ivan Rapaport, and Ioan
  Todinca.
\newblock Distributed subgraph detection.
\newblock {\em CoRR}, abs/1706.03996, 2017.

\bibitem{FO17}
Pierre Fraigniaud and Dennis Olivetti.
\newblock Distributed detection of cycles.
\newblock {\em ACM Transactions on Parallel Computing (TOPC)}, 6(3):1--20,
  2019.

\bibitem{FRST16}
Pierre Fraigniaud, Ivan Rapaport, Ville Salo, and Ioan Todinca.
\newblock Distributed testing of excluded subgraphs.
\newblock In {\em Proceedings of the 30th International Symposium on
  Distributed Computing (DISC)}, volume 9888 of {\em LNCS}, pages 342--356.
  Springer, 2016.

\bibitem{gazit1990randomized}
Hillel Gazit and J~Reif.
\newblock A randomized parallel algorithm for planar graph isomorphism.
\newblock In {\em Proceedings of the second annual ACM symposium on Parallel
  algorithms and architectures}, pages 210--219, 1990.

\bibitem{DBLP:journals/eccc/Goldreich19b}
Oded Goldreich.
\newblock Testing isomorphism in the bounded-degree graph model.
\newblock {\em Electronic Colloquium on Computational Complexity {(ECCC)}},
  26:102, 2019.

\bibitem{goldreich1998property}
Oded Goldreich, Shari Goldwasser, and Dana Ron.
\newblock Property testing and its connection to learning and approximation.
\newblock {\em Journal of the ACM (JACM)}, 45(4):653--750, 1998.

\bibitem{GR02}
Oded Goldreich and Dana Ron.
\newblock Property testing in bounded degree graphs.
\newblock {\em Algorithmica}, 32(2):302--343, 2002.

\bibitem{grohe2006testing}
Martin Grohe and Oleg Verbitsky.
\newblock Testing graph isomorphism in parallel by playing a game.
\newblock In {\em International Colloquium on Automata, Languages, and
  Programming}, pages 3--14. Springer, 2006.

\bibitem{hopcroft1974linear}
John~E Hopcroft and Jin-Kue Wong.
\newblock Linear time algorithm for isomorphism of planar graphs (preliminary
  report).
\newblock In {\em Proceedings of the sixth annual ACM symposium on Theory of
  computing}, pages 172--184. ACM, 1974.

\bibitem{izumi2017triangle}
Taisuke Izumi and Fran{\c{c}}ois Le~Gall.
\newblock Triangle finding and listing in congest networks.
\newblock In {\em Proceedings of the ACM Symposium on Principles of Distributed
  Computing}, pages 381--389, 2017.

\bibitem{jaja1988parallel}
Joseph Jaja and S~Rao Kosaraju.
\newblock Parallel algorithms for planar graph isomorphism and related
  problems.
\newblock {\em IEEE Transactions on Circuits and Systems}, 35(3):304--311,
  1988.

\bibitem{KKR04}
Tali Kaufman, Michael Krivelevich, and Dana Ron.
\newblock Tight bounds for testing bipartiteness in general graphs.
\newblock {\em SIAM Journal on Computing}, 33(6):1441--1483, 2004.

\bibitem{kelly1957congruence}
Paul~J Kelly et~al.
\newblock A congruence theorem for trees.
\newblock {\em Pacific Journal of Mathematics}, 7(1):961--968, 1957.

\bibitem{kobler2006graph}
Johannes K{\"o}bler.
\newblock On graph isomorphism for restricted graph classes.
\newblock In {\em Conference on Computability in Europe}, pages 241--256.
  Springer, 2006.

\bibitem{korhonen2018deterministic}
Janne~H Korhonen and Joel Rybicki.
\newblock Deterministic subgraph detection in broadcast congest.
\newblock In {\em 21st International Conference on Principles of Distributed
  Systems (OPODIS 2017)}. Schloss Dagstuhl-Leibniz-Zentrum fuer Informatik,
  2018.

\bibitem{DBLP:books/daglib/0011756}
Eyal Kushilevitz and Noam Nisan.
\newblock {\em Communication complexity}.
\newblock Cambridge University Press, 1997.

\bibitem{KY14}
Mitsuru Kusumoto and Yuichi Yoshida.
\newblock Testing forest-isomorphism in the adjacency list model.
\newblock In {\em Automata, Languages, and Programming - 41st International
  Colloquium, {ICALP} 2014, Copenhagen, Denmark, July 8-11, 2014, Proceedings,
  Part {I}}, volume 8572 of {\em Lecture Notes in Computer Science}, pages
  763--774. Springer, 2014.

\bibitem{DBLP:conf/podc/LeviMR18}
Reut Levi, Moti Medina, and Dana Ron.
\newblock Property testing of planarity in the {CONGEST} model.
\newblock In {\em Proceedings of the 2018 {ACM} Symposium on Principles of
  Distributed Computing, {PODC} 2018, Egham, United Kingdom, July 23-27, 2018},
  pages 347--356, 2018.
\newblock Full version is available in http://arxiv.org/abs/1805.10657.

\bibitem{L92}
Nathan Linial.
\newblock Locality in distributed graph algorithms.
\newblock {\em SIAM Journal on Computing}, 21(1):193--201, 1992.

\bibitem{luks1986parallel}
Eugene~M Luks.
\newblock Parallel algorithms for permutation groups and graph isomorphism.
\newblock In {\em 27th Annual Symposium on Foundations of Computer Science
  (sfcs 1986)}, pages 292--302. IEEE, 1986.

\bibitem{NS13}
Ilan Newman and Christian Sohler.
\newblock Every property of hyperfinite graphs is testable.
\newblock {\em SIAM Journal on Computing}, 42(3):1095--1112, 2013.

\bibitem{DBLP:conf/stoc/OnakS18}
Krzysztof Onak and Xiaorui Sun.
\newblock The query complexity of graph isomorphism: bypassing distribution
  testing lower bounds.
\newblock In {\em Proceedings of the 50th Annual {ACM} {SIGACT} Symposium on
  Theory of Computing, {STOC} 2018, Los Angeles, CA, USA, June 25-29, 2018},
  pages 165--171, 2018.

\bibitem{P00}
David Peleg.
\newblock Distributed computing.
\newblock {\em SIAM Monographs on discrete mathematics and applications}, 5,
  2000.

\bibitem{verbitsky2007planar}
Oleg Verbitsky.
\newblock Planar graphs: Logical complexity and parallel isomorphism tests.
\newblock In {\em Annual Symposium on Theoretical Aspects of Computer Science},
  pages 682--693. Springer, 2007.

\bibitem{yao1977probabilistic}
Andrew Chi-Chin Yao.
\newblock Probabilistic computations: Toward a unified measure of complexity.
\newblock In {\em 18th Annual Symposium on Foundations of Computer Science
  (sfcs 1977)}, pages 222--227. IEEE, 1977.

\end{thebibliography}

\appendix
\section{Probabilistic Preliminaries}\label{sec-prob}
\begin{theorem}[Multiplicative Chernoff's Bound] \label{thm-cher}
Let $X_1, \ldots, X_n$ be identical independent random variables ranging in $[0,1]$ and let $p = \mathbb{E}[X_1]$. Then, for every $\gamma \in (0,2]$, it holds that
\begin{equation}
\Pr\left[\left|\frac{1}{n} \cdot \sum_{i\in [n]} X_i - p\right| > \gamma \cdot p\right] < 2 \cdot e^{-\gamma^2 pn/4}\;.\label{eq-cher}
\end{equation}
\end{theorem}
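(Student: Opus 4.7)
The plan is to apply the standard Chernoff bounding technique (exponential moment method together with Markov's inequality), handling the two tails separately and then combining via a union bound to account for the factor of $2$.

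First I would establish the upper tail $\Pr[\frac{1}{n}\sum X_i - p > \gamma p]$. The starting point is the convexity bound $e^{tx} \le 1 + (e^t-1)x$ valid for every $x \in [0,1]$ and $t \ge 0$. Taking expectations gives $\mathbb{E}[e^{tX_i}] \le 1 + (e^t - 1)p \le \exp\bigl((e^t-1)p\bigr)$. By independence, with $S = \sum_{i} X_i$, this yields $\mathbb{E}[e^{tS}] \le \exp\bigl((e^t-1)pn\bigr)$. Markov's inequality then gives, for any $t > 0$,
\begin{equation*}
\Pr[S > (1+\gamma)pn] \le \exp\bigl((e^t-1)pn - t(1+\gamma)pn\bigr).
\end{equation*}
Optimizing by setting $t = \ln(1+\gamma)$ produces the bound $\bigl(e^{\gamma}/(1+\gamma)^{1+\gamma}\bigr)^{pn}$.

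For the lower tail $\Pr[\frac{1}{n}\sum X_i - p < -\gamma p]$ I would repeat the same argument with $t < 0$ (Markov applied to $e^{-tS}$), arriving at the analogous bound $\bigl(e^{-\gamma}/(1-\gamma)^{1-\gamma}\bigr)^{pn}$ for $\gamma \in (0,1]$; for $\gamma > 1$ the event is vacuous since $S \ge 0$. Adding the two tail bounds gives the factor of $2$ in the stated inequality.

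The main technical obstacle — and really the only non-routine part — is massaging the exponents into the clean form $-\gamma^2 pn / 4$ over the full range $\gamma \in (0,2]$. Concretely, I need the calculus estimates $(1+\gamma)\ln(1+\gamma) - \gamma \ge \gamma^2/4$ on $(0,2]$ for the upper tail and $(1-\gamma)\ln(1-\gamma) + \gamma \ge \gamma^2/4$ on $(0,1]$ for the lower tail. For the upper inequality, define $\varphi(\gamma) = (1+\gamma)\ln(1+\gamma) - \gamma - \gamma^2/4$; then $\varphi(0) = 0$ and $\varphi'(\gamma) = \ln(1+\gamma) - \gamma/2$, which is nonnegative on $(0,2]$ because $\ln(1+\gamma) - \gamma/2$ vanishes at $0$, has derivative $\frac{1}{1+\gamma} - \frac{1}{2}$ that is positive on $(0,1)$ and negative on $(1,2]$, and takes the positive value $\ln 3 - 1 > 0$ at the endpoint $\gamma = 2$. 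Hence $\varphi \ge 0$ on $(0,2]$. The lower-tail estimate is handled by the analogous (and easier) Taylor expansion of $(1-\gamma)\ln(1-\gamma) + \gamma$ around $0$. Combining everything yields the claimed bound $2 e^{-\gamma^2 pn/4}$.
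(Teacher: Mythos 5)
The paper states this Chernoff bound as a standard probabilistic preliminary and gives no proof of its own, so there is nothing to compare against; your argument stands on its own and is correct. It is the standard exponential-moment (Markov) derivation, and you have correctly identified and verified the only non-routine step, namely that $(1+\gamma)\ln(1+\gamma)-\gamma\ge\gamma^2/4$ holds on the full range $\gamma\in(0,2]$ (your monotonicity argument via $\varphi'(\gamma)=\ln(1+\gamma)-\gamma/2$, checked at the endpoint $\gamma=2$ where $\ln 3-1>0$, is sound), together with the observation that the lower tail is vacuous for $\gamma>1$ and satisfies the stronger bound $(1-\gamma)\ln(1-\gamma)+\gamma\ge\gamma^2/2$ for $\gamma\in(0,1]$.
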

\ifnum\icalp=1

\fi

\section{Detailed Proof of Thm.~\ref{thm:lb}}\label{app:liftassumplb}
In Theorem~\ref{thm:lb} we showed a family of graphs  for which we proved that any (randomized) tester errs with probability at least $1/3$. 
We assumed that the output of each node is invariant under re-labeling of the nodes and port numbers. In this section we re-prove this theorem without making this assumption. 
Given $D$ we construct $\cal{H}_L$ similarly to the construction of $\cal{H}$ (as in the less detailed proof of Theorem~\ref{thm:lb}) only that we specify the labels of the nodes and the port numbers in each graph.  

We begin with specifying the port numbers of the graphs. 
We assume without loss of generality that $D$ is even.
Port numbers of the path $p$ are set to be so that the edges adjacent to the graphs on each endpoint of $p$ are connected via port number $1$, and interior nodes are connected in an alternating fashion, that is, port $2$ is connected to port $1$, etc. This port assignment is possible since the length of $p$ is even.
Port numbering assignment for $G_1$ and $G_2$ may be arbitrary but fixed. The port in which the path $p$ is connected to either graphs is fixed as well. 

We next describe the labeling of the nodes in each of the graphs. 
The are three, mutually disjoint, sets of labels $A,B$ and $L$ where $|A| = |B| = n$ and $|L| = D-2$.
The labels from $A$ and $B$ are assigned to $G_1$ and $G_2$ in an arbitrary but fixed manner. 
This yields $4$ labeled graphs $G_i(S)$ for $i\in\{1,2\}$ and $S \in \{A,B\}$. 
For the path $p$ we consider \emph{two} label assignments: one where the labels in $L$ are assigned to interior nodes in $p$ in ascending order and one in descending order. 
We denote these two labeled paths by $p_a$ and $p_b$, respectively.
In turn, $\cal{H}_L$ has $2$ labeled graph for each graph in $\cal{H}$ (taking into account the labels assignment of $p$). 
These labeled graphs are denoted by $G_{i,j}(S_1,S_2,o)$ for $i,j \in \{1,2\}$, and $S_1,S_2 \in \{A,B\}$ s.t. $S_1 \neq S_2$, and $o \in \{a,d\}$ which denotes the orientation of the label assignment of $p$.
Note that the same label does not appear more than once in any of the graphs. 

We now define the set of inputs to the distributed tester: (1)~we fix the known graph $G_K$ to be $G_{1,2}(A,B,a)$, (2)~the unknown graph $G_U$ is chosen randomly from the distribution $\tilde{\cal{H}}$ which is defined as follows: the probability to obtain each of the four graphs in $G'_{1,2} \triangleq\{G_{1,2}(A,B,o), G_{1,2}(B,A,o) \mid o \in \{a,d\}\}$ is $1/6$ and each of the graphs in $G'_{1,1} \triangleq\{G_{1,1}(A,B,o) \mid o \in \{a,d\}\}$, and $G'_{2,2} \triangleq\{G_{2,2}(A,B,o) \mid o \in \{a,d\}\}$ is obtained with probability $1/12$.

We consider three cases. 
The first case is when $\mathcal{A}$ outputs (correctly) \accept\ whenever $G_U \in G'_{1,2}$.
By the definition of the problem this implies that all nodes output \accept.
Since the number of rounds of the tester is at most $D/3$, it follows that all the nodes in the graphs in $G'_{1,1}$, and $G'_{2,2}$, output \accept.
To see this observe consider any node $v$ in any one of the graphs in $G'_{1,1}$ and $G'_{2,2}$.
It is not hard to see that there exists a graph in $G'_{1,2}$ in which there exists a node, $u$, such that the $D/3$-hop neighborhoods of $v$ and $u$ are exactly the same (also when taking into account the labels on the nodes and the port numbers). 
Thus, in this case, $\mathcal{A}$ errs on all the graphs in $G'_{1,1}$ and $G'_{2,2}$, i.e., an error of $1/3$.

The second case is when $\mathcal{A}$ outputs \accept\ on exactly three of the graphs in $G'_{1,2}$ (that is, when $G_U$ equals to one of these three graphs).
Assume that the three graphs are $G_{1,2}(A,B,a)$, $G_{1,2}(A,B,b)$ and $G_{1,2}(B,A,b)$.
From the same reasoning as above, this implies that $\mathcal{A}$ outputs \accept\ also on $G_{1,1}(A,B,a)$ and $G_{2,2}(B,A,a)$. 
Thus, in this case the error probability is at least $1/3$ as well. 
The same analysis goes through for any three graphs in $G'_{1,2}$.

The third case is when $\mathcal{A}$ outputs \accept\ on at most two of the graphs in $G'_{1,2}$.
In this case it is immediate to see that the error probability is at least $1/3$. 

By Yao's principle it is implied that any randomized tester that performs at most $D/3$ rounds must err with probability at least $1/3$ as well.
By a straightforward amplification argument, this implies that any randomized tester that performs at most $D/9$ rounds must err with probability strictly greater than $1/3$~\footnote{To see this, assume towards contradiction that there exists a tester that succeeds with probability $2/3$ and performs at most $D/9$ rounds. Then by running this tester three times and taking a majority vote we obtain a tester that performs at most $D/3$ rounds and errs with probability less than $1/3$ in contradiction to what we showed.}, as desired.

\section{Missing proofs}\label{sec:mpfs}
\subsection{Proof of Claim~\ref{clm-sep}}
\cseven
\subsection{Proof of Lemma~\ref{lem-close}}
\leight

\subsection{Proof of Claim~\ref{lemma8}}
\cnine

\ifnum\icalp=1
\subsection{Proof of Claim~\ref{clm:1}}
\cthirt 
\fi

\end{document}